\documentclass[twocolumn,10pt,superscriptaddress,amsmath,amssymb,aps,prb,floatfix,]{revtex4-2}

\usepackage{graphicx}
\usepackage{bm}
\usepackage{times}
\usepackage{amsmath,amssymb}
\usepackage{enumerate}
\usepackage{multirow}
\usepackage{xcolor}
\usepackage{comment}
\usepackage{amsthm}

\usepackage[colorlinks=true,linkcolor=blue,citecolor=red, linktocpage=true,breaklinks=true]{hyperref}


\newcommand{\up}{\uparrow}
\newcommand{\eq}{\begin{equation}}
\newcommand{\en}{\end{equation}}
\newcommand{\eqa}{\begin{eqnarray}}
\newcommand{\ena}{\end{eqnarray}}

\newtheorem{theorem}{Theorem}
\newtheorem{lemma}{Lemma}


\begin{document}

\title{Absence of local conserved charges of the Fredkin spin chain and its truncated versions}

\author{Wen-Ming Fan}
\affiliation{School of Physics, Northwest University, Xi'an 710127, China}

\author{Kun \surname{Hao}}
\affiliation{Institute of Modern Physics, Northwest University, Xi'an 710127, China}
\affiliation{Shaanxi Key Laboratory for Theoretical Physics Frontiers, Xi'an 710127, China}
\affiliation{Peng Huanwu Center for Fundamental Theory, Xi'an 710127, China}

\author{Yang-Yang Chen}
\affiliation{Institute of Modern Physics, Northwest University, Xi'an 710127, China}
\affiliation{Shaanxi Key Laboratory for Theoretical Physics Frontiers, Xi'an 710127, China}
\affiliation{Peng Huanwu Center for Fundamental Theory, Xi'an 710127, China}

\author{Kun Zhang}
\email{kunzhang@nwu.edu.cn}
\affiliation{School of Physics, Northwest University, Xi'an 710127, China}
\affiliation{Shaanxi Key Laboratory for Theoretical Physics Frontiers, Xi'an 710127, China}
\affiliation{Peng Huanwu Center for Fundamental Theory, Xi'an 710127, China}

\author{Xiao-Hui Wang}
\email{xhwang@nwu.edu.cn}
\affiliation{School of Physics, Northwest University, Xi'an 710127, China}
\affiliation{Shaanxi Key Laboratory for Theoretical Physics Frontiers, Xi'an 710127, China}
\affiliation{Peng Huanwu Center for Fundamental Theory, Xi'an 710127, China}

\author{Vladimir \surname{Korepin}}
\affiliation{C.N. Yang Institute for Theoretical Physics, Stony Brook University, New York 11794, USA}

\date{\today}

\begin{abstract}

Conservation laws serve as the hallmark of integrability. The absence of conserved charges typically implies that the model is nonintegrable. The recently proposed Fredkin spin chain exhibits rich structures, and its ground state is analytically known. However, whether the Fredkin spin chain is integrable remains an open question. In this work, through rigorous analytical calculations, we demonstrate that the Fredkin spin chain, under both periodic and open boundary conditions, lacks local conserved charges, thereby confirming its nonintegrable nature. Furthermore, we find that when one or a portion of the Hamiltonian terms are removed (referred to as the truncated Fredkin spin chain), local conserved charges are still absent. Our findings suggest that in models involving three-site interactions, integrable models are generally rare.

\end{abstract}

\maketitle

\section{\label{sec:intro} Introduction}

Integrable and solvable models have played pivotal roles in the development of nearly all branches of physics. Unlike classical integrability, which is rigorously defined by the Liouville-Arnold theorem \cite{Arnold1989Mathematical}, quantum integrability lacks a universally accepted definition and is often context-dependent \cite{Weigert1992problem,Caux2010RemarksOT}. Quantum many-body models solved using the algebraic Bethe ansatz and the quantum inverse scattering method \cite{Faddeev1996How,korepin1997quantum}, which are based on the Yang-Baxter equation \cite{yang1967some,yang1968s,baxter1972partition}, can be referred to as Yang-Baxter integrable. This method not only provides the eigenvalues and eigenstates of the system but also ensures that the transfer matrix generates local conserved charges that are linearly independent of the Hamiltonian. The existence of nontrivial conserved charges is analogous to the classical Liouville-Arnold integrability framework.

The presence of local conserved charges in quantum integrable models gives rise to various unconventional behaviors in non-equilibrium dynamics. One notable example is the absence of thermalization \cite{rigol2007relaxation,langen2015experimental,essler2016quench}, and the generalized Gibbs ensemble has been proposed \cite{Vidmar2016GeneralizedGE}. With the advancement of quantum technology, it has been found that local conserved charges can be effectively applied in benchmarking quantum computers \cite{maruyoshi2023conserved} and preparing eigenstates in integrable models \cite{Lutz2025AdiabaticQS}.

Mathematically, there is no general method to determine whether a system is Yang-Baxter integrable, making it seemingly impossible to prove nonintegrability directly. However, the existence of local conserved charges can be computationally verified, providing a practical approach to assess integrability. Initially proposed by Grabowski and Mathieu, the conjecture states that the existence of 3-local conserved charges in a 2-local Hamiltonian model serves as a necessary and sufficient condition for integrability \cite{grabowski1995integrability}. Recently, Shiraishi demonstrated the absence of local conserved charges in the XYZ-h model \cite{shiraishi2019proofXYZh}, showing the feasibility of proving the non-existence of $k$-local conserved charges without any assumptions. Subsequent research has extended this approach, proving that the mixed-field Ising model \cite{chiba2024proofIsing}, models with next-nearest-neighbor interactions \cite{shiraishi2024absenceXIX,shiraishi2025complete}, the PXP model \cite{park2025graph,park2025nonintegrability}, the spin-1 bilinear-biquadratic model \cite{hokkyo2024proof,park2025proofspin1}, the isotropic spin models \cite{Shiraishi2025DichotomyTS}, higher-dimensional Hubbard model \cite{futami2025absence}, and higher-dimensional spin models \cite{shiraishi2024s,chiba2025proofhighdimension,Futami2025AbsenceON} do not possess any unknown integrable structures \cite{yamaguchi2024classification,yamaguchi2024proof}. The same methodology can also be applied to construct a complete set of local conserved charges for quantum integrable models \cite{Nozawa2020ExplicitCO,Fukai2023All,Fukai2024Proof}, highlighting its versatility in both proving nonintegrability and establishing integrability. Note that numerical tests on the existence of local conserved charges are possible, but they are limited to $k$-local conserved charges with small $k$ \cite{Moudgalya2023NumericalMF,Bentsen2019IntegrableAC,Zhan2023LearningCL,Pawowski2025FindingLI}.

Although several models have been proven to lack local conserved charges, the integrability of the recently proposed Fredkin spin chain \cite{salberger2018fredkin}, with three-site nearest-neighbor interactions, has not yet been addressed. The ground state of the Fredkin spin chain has an analytical expression and is the uniform superposition of Dyck states, which exhibits unusually strong entanglement. The unusual entanglement of the ground state has motivated extensive studies of the model and its extensions \cite{salberger2017deformed,udagawa2017finite,zhang2017entropy,padmanabhan2019quantum,voinea2024deformed}.
Numerical studies on the Fredkin spin chain have revealed several signatures typically associated with nonintegrability, including Wigner-Dyson level statistics \cite{causer2024nonthermal}, Hilbert space fragmentation and quantum many-body scars \cite{langlett2021hilbert,causer2024nonthermal}, and behaviors of nonequilibrium dynamics \cite{adhikari2021slow,singh2021subdiffusion}. While these phenomena presents a compelling physical picture of nonintegrability, a rigorous mathematical proof is still necessary.

The motivation for investigating the integrability of the Fredkin spin model is twofold. Firstly, research on the integrability of models with medium-range interactions and partially solvable systems is rare. Our objective is to address this gap and contribute to a more comprehensive understanding of the integrability landscape. Secondly, the integer-spin counterpart of the Fredkin spin model, known as the Motzkin model \cite{Movassagh2014PowerLV}, exhibits unusual integrability properties when certain Hamiltonian terms are removed \cite{Tong2020ShorMovassaghCL,Hao2022ExactSO}. This intriguing observation raises the question of whether a similar integrable structure exists in the Fredkin spin model.

In this work, we systematically search for local conserved charges in the Fredkin spin model and its variants, employing a strategy similar to that of Shiraishi’s work \cite{shiraishi2019proofXYZh}. We rigorously prove that the Fredkin spin model is nonintegrable, namely the absence of local conserved charges, under both periodic and open boundary conditions. The proof is more complex than that for Heisenberg models with nearest-neighbor and next-nearest-neighbor interactions, due to the three-site interactions in the Fredkin spin model. Furthermore, we establish the nonintegrability of variants of the Fredkin spin chain, obtained by selectively removing certain three-site interactions. Our results do not contradict the integrability structures observed in Fredkin-like constrained models, which are essentially different compared to the Fredkin spin model \cite{singh2023fredkin,mccarthy2025subdiffusive}. Our results contribute to a more comprehensive integrability landscape of spin models and highlight the challenges in extending integrability concepts to systems with medium-range interactions \cite{Gombor2021IntegrableSC}.

The paper is organized as follows. In Section \ref{sec:preli}, we introduce the Fredkin spin model, define the notations, and briefly outline the proof strategy. In Section \ref{sec:absence_3_4_5}, we prove the absence of 4- and 5-local conserved charges in the Fredkin spin model. The proofs for the 4- and 5-local cases differ and are extended to the general even and odd cases in Section \ref{sec:absence_general}. In Section \ref{sec:open_boundary_condition}, we prove that the Fredkin spin model with open boundary conditions does not admit boundary conserved charges. The variants of the Fredkin spin model are investigated in Section \ref{sec:simplied_fredkin}, and they are also shown to be nonintegrable. We conclude the manuscript in Section \ref{sec:conclusion}. Three appendices are included, providing detailed calculations relevant to the proofs.

\section{\label{sec:preli} Preliminary}
\subsection{Model}

Our research subject is the $S = 1/2$ Fredkin spin chain, whose Hamiltonian is given by \cite{salberger2018fredkin}
\begin{equation}\label{origin H}
    H = H_{\partial} + \sum_j H_j,
\end{equation}
where $H_{\partial}$ represents the boundary term, and $H_j$ denotes the Hamiltonian density in the bulk. Specifically, the Hamiltonian density is expressed as
\begin{equation}
    H_j = (1\!\!1 - F_{j,j+1,j+2}) + (1\!\!1 - X_{j+2} F_{j+2,j+1,j} X_{j+2}),
\end{equation}
where $X_j$ is the Pauli-$x$ matrix acting on the $j$-th site, and $F_{j,j+1,j+2}$ is the three-qubit Fredkin gate \cite{nielsenQuantumComputationQuantum2010}, defined as
\begin{equation}
    F_{j,k,l} = |\downarrow\rangle_j \langle \downarrow| \otimes 1\!\!1_k \otimes 1\!\!1_l + |\uparrow\rangle_j \langle \uparrow| \otimes \text{SWAP}_{k,l}\,.
\end{equation}
Here $|\uparrow\rangle$ and $|\downarrow\rangle$ are the eigenstates of the Pauli-$z$ matrix, corresponding to spin-up and spin-down states, respectively. The two-qubit SWAP gate is given by
\begin{equation}\label{eq:SWAP_gate}
    \text{SWAP}_{k,l} = \frac{1}{2} (1\!\!1_k \otimes 1\!\!1_l + X_k \otimes X_l + Y_k \otimes Y_l + Z_k \otimes Z_l),
\end{equation}
where $Y$ and $Z$ are the remaining two Pauli matrices. The Fredkin gate $F_{j,k,l}$ acts as the identity operator if the $j$-th spin is in the $|\downarrow\rangle$ state, while it applies the $\text{SWAP}_{k,l}$ gate if the $j$-th spin is in the $|\uparrow\rangle$ state. Therefore, it is also referred to as the controlled-SWAP gate.

We can rewrite the Hamiltonian density using the definition of the SWAP gate in Eq. (\ref{eq:SWAP_gate}). Furthermore, when considering the periodic boundary condition, the Hamiltonian density is equivalent to (up to an overall constant factor)
\begin{equation}
    H_j \propto H_j^\text{(2)} + H_j^\text{(3)},
\end{equation}
where the nearest-neighbor interaction term is given by
\begin{equation}
    H_j^\text{(2)} = -2 \left( X_j X_{j+1} + Y_j Y_{j+1} + Z_j Z_{j+1} \right),
\end{equation}
and the three-local interaction term is expressed as
\begin{multline}\label{H_tilde}
    H_j^\text{(3)} = X_j X_{j+1} Z_{j+2} + Y_j Y_{j+1} Z_{j+2} \\
    - Z_j X_{j+1} X_{j+2} - Z_j Y_{j+1} Y_{j+2}.
\end{multline}
We have omitted the identity operator in the Hamiltonian since it has no contribution on the conserved charges. Here, $H_j^\text{(2)}$ is precisely the Hamiltonian density of the Heisenberg XXX spin chain \cite{Faddeev1996How}. Therefore, the Fredkin spin chain can be interpreted as a ``dressed'' Heisenberg XXX model.

Under the open boundary condition, the summation index $j$ travels from $1$ to $N-2$, where $N$ denotes the size of the model. The original Fredkin spin chain includes an open boundary term given by $H_{\partial} = |\downarrow\rangle_1 \langle \downarrow| + |\uparrow\rangle_N \langle \uparrow|$. The Fredkin spin chain is frustration-free and possesses a unique ground state, which corresponds to a uniform superposition of Dyck paths. The entanglement entropy of the half-chain scales as $\log N$. This clear violation of the area law indicates that the model is gapless, a result that has been rigorously proven in \cite{movassagh2016gap}.

When considering the periodic boundary condition with $H_\partial = 0$ and $N+j = j$, the ground state becomes degenerate. The degeneracy is related to the parity of the length $N$ \cite{Pronko2025SymmetriesOT}.  In addition to the obvious translation symmetry and $U(1)$ symmetry, the Fredkin spin chain possesses additional nonlocal conserved charges \cite{Pronko2025SymmetriesOT}. It relates to the degeneracy of the ground state under the periodic boundary condition. The nonlocal conserved charges are linear combinations of the total spin operator $S^\pm$. Note that the number of these nonlocal conserved charges is always two, which is independent on the total site number $N$, therefore are irrelevant to the integrability.

\subsection{Proof Idea}

First, we denote a Pauli string of length $l$ as
\begin{equation}\label{Ajl}
    \mathbf{A}_j^l=A_j^1A_{j+1}^2\cdots A_{j+l-2}^{l-1}A_{j+l-1}^l,
\end{equation}
where the subscript $j$ indicates that the starting site is $j$. The superscript $l$ denotes the operator string length. The operator $A_{j+k}^{k+1}$, which acts on the site $j+k$, is the $(k+1)$-th component of the string $\mathbf{A}_j^l$. We require that the first and last operators in the string $\mathbf{A}_j^l$, namely $A_j^1$ and $A_{j+l-1}^l$, belong to the set $\{X, Y, Z\}$, while the intermediate operators can include the identity operator, namely $A_{j+k}^{k+1} \in \{I, X, Y, Z\}$ for $1 \leq k \leq l-2$. For simplicity, we also alternatively use a short notation for Pauli strings $\mathbf{A}_j^l = (A^1A^2\cdots A^l)_{j}^l$. For example, $X_jY_{j+1}Z_{j+2}$ can be denoted as $(XYZ)_j^3$, and $X_jX_{j+3}$ as $(XIIX)_j^4$.

If $l\le N/2$, meaning that $\mathbf{A}_j^l$ acts on consecutive sites spanning less than half the chain length, we call $\mathbf{A}_j^l$ as a local charge. The summation of local operators is also called local. For example, the Fredkin spin chain Hamiltonian is a 3-local operator. Conversely, charges acting on more consecutive sites are called nonlocal. For example, $Z_jX_{j+1}X_{j+2}$ is a $3$-local charge, $X_jZ_{j+1}Y_{j+N/2}$ is a nonlocal one. Note that different boundary conditions also have an impact. For example $X_NY_1$ is $2$-local operator under the periodic boundary condition but becomes $N$-local under the open boundary condition. The definitions of ``local'' and ``nonlocal'' appears to be rather contrived. In fact, such definitions works well to distinguish the conserved charges originated from the Hamiltonian or not \cite{grabowski1995integrability}. For example, $H^2$ is a conserved quantity but nonlocal, therefore not related to integrability. 

A general translation invariant $k$-local operator can be expressed as
\begin{equation}\label{Qexpression}
    Q=\sum_{l=1}^{k} \sum_{j=1}^N \sum_{\mathbf{A}_j^l} q_{\mathbf{A}^l}\mathbf{A}_j^l,
\end{equation}
where $q_{\mathbf{A}^l}$ is the coefficient corresponding to each $\mathbf{A}_j^l$. Since the Hamiltonian density of the Fredkin spin chain consists of at most $3$-local operators, the commutator of $Q$ and $H$ is an at most $(k+2)$-local operator, which can also be expressed in a similar way
\begin{equation}\label{[Q,H]}
    [Q,H]=\sum_{l=1}^{k+2} \sum_{j=1}^N \sum_{\mathbf{B}_j^l} r_{\mathbf{B}^l} \mathbf{B}^l_j,
\end{equation}
where $\mathbf{B}^l_j$ denotes the basis operators of Pauli string with length $l$, while $r_{\mathbf{B}^l}$ are the corresponding coefficients, which have linear relations with $q_{\mathbf{A}^l}$. 

If $[Q,H] = 0$, we call $Q$ as a $k$-local conserved charge of $H$. The Hamiltonian of Fredkin spin chain is a 3-local conserved charge, which is trivial for integrability. We look for the $k$-local conserved charges with $4\le k\le N/2$. To prove whether the Fredkin spin chain has nontrivial local conserved charges, we follow Shiraishi’s strategy \cite{shiraishi2019proofXYZh}. We first set $\left[ Q , H \right]=0$, which gives $r_{\mathbf{B}^l}=0$ for all $\mathbf{B}^l_j$. Through the linear relations between the coefficients $r_{\mathbf{B}^l}$ and $q_{\mathbf{A}^l}$, we solve coefficients $q_{\mathbf{A}^l}$. If the only solution is $q_{\mathbf{A}^l}=0$ for all $\mathbf{A}^l$, then the model is absence of nontrivial conserved charge. Conversely, if $q_{\mathbf{A}^l} \ne 0$ gives $[Q,H] = 0$, it indicates the (Yang-Baxter) integrability of the model. 

Solving $[Q,H] = 0$ is a formidable task. Apparently the linear equation of $q_{\mathbf{A}^l}$ is overdetermined. The trick is to track $q_{\mathbf{A}^k}$ from $r_{\mathbf{B}^{k+2}}=0$. If $q_{\mathbf{A}^k}$ can not be determined by $r_{\mathbf{B}^{k+2}}=0$, then we look for the lower order equations $r_{\mathbf{B}^{k+1}}=0$. In most cases, we do not need considering all equations to determine that $q_{\mathbf{A}^k}=0$.

\subsection{Column expression}

A commutator between a Pauli string operator and a Hamiltonian term (also is a Pauli string operator) is also a Pauli operator, for example,
\begin{equation}
    \left[(ZXYZ)_j^4,(XXZ)_{j+3}^3\right]=2i (ZXYYXZ)_j^6.
\end{equation}
Using the column expression applied in Ref. \cite{shiraishi2019proofXYZh}, the above commutation relation can be written as 
\begin{equation}\label{column expression eg}
\begin{matrix}
  & Z_j & X_{j+1} & Y_{j+2} & Z_{j+3} &  & \\
  &  &  &  & X_{j+3} & X_{j+4} &Z_{j+5} \\\hline
  & Z_j & X_{j+1} & Y_{j+2} & Y_{j+3} & X_{j+4} &Z_{j+5}.
\end{matrix}
\end{equation}
The first row corresponds to Pauli string in $Q$, while the second corresponds to an operator in the Hamiltonian. The bottom row shows the commutator result. Each column aligns with the spin sites. The factor $2i$ is omitted as it can be inferred from each commutator. The column expression has the advantage of clearly illustrating which operators are employed to generate other Pauli operators and which do not interact due to their distinct positions.

As another example, the operator $(ZXYYXZ)_j^6$ can be generated by another commutator, namely
\begin{equation}\label{column expression eg 2}
\begin{matrix}
  & &  & Z & Y & X &Z \\
 - & Z & X & X &  &  & \\\hline
 - & Z & X & Y & Y & X &Z.
\end{matrix}    
\end{equation}
Here we have dropped all subscripts of spin site for brevity, and we do the same in the following when there is no ambiguous. For an $N=8$ Fredkin spin chain with the periodic boundary condition, the operator $(ZXYYXZ)_j^6$, as an example of $\mathbf{B}^l_j$ in Eq. (\ref{[Q,H]}), can only be generated from the commutators (\ref{column expression eg}) and (\ref{column expression eg 2}). Thus, we denote that $(ZXYZ)_j^4$ and $(ZYXZ)_{j+2}^4$ form a pair in generating $(ZXYYXZ)_j^6$, or equivalently, $(ZXYZ)_j^4$ and $(ZYXZ)_{j+2}^4$ are called connected.

Furthermore, the two commutators (\ref{column expression eg}) and (\ref{column expression eg 2}) gives a linear relation among coefficients
\begin{equation}
q_{ZXYZ}-q_{ZYXZ}=r_{ZXYYXZ},
\end{equation}
where translation invariance has been applied. The conserved charge requires $r_{ZXYYXZ}=0$, therefore we get a constraint
\begin{equation}
    q_{ZXYZ}=q_{ZYXZ}.
\end{equation}
Similar analysis on other $4$-local operators can lead to a series of linear relations of the coefficients $q_{\mathbf{A}^l}$. Their solutions allow us to confirm the existence or absence of conserved charges. We employ this approach systematically throughout the subsequent analysis.

\section{\label{sec:absence_3_4_5} Absence of 4- or 5-local conserved charges}

As noted earlier, the Hamiltonian of Fredkin spin chain is a trivial 3-local conserved charge. It is well-known that Heisenberg XXX spin chain is integrable and therefore possess a tower of local conserved charges \cite{Grabowski1994QuantumIO}. We have verified that the 3-local conserved charge of Heisenberg XXX spin chain does not commute with the three-site interaction $H_j^{(3)}$ (\ref{H_tilde}) of the Fredkin spin chain, implying that the Fredkin spin chain Hamiltonian is the only 3-local conserved charge. In this section, as a warm-up for the general case, we demonstrate that
\begin{theorem}
\label{theo:local_4_5_charge}
   The Fredkin spin chain does not possess any 4- or 5-local conserved charges under periodic boundary conditions.
\end{theorem}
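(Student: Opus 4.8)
The plan is to follow the Shiraishi strategy set up above: write $Q$ as a general translation-invariant operator of length at most $k$ (with $k=4$ or $5$), impose $[Q,H]=0$, and descend from the top-length equations. The essential simplification is that the highest-length output, a $(k+2)$-local Pauli string $\mathbf{B}^{k+2}$, can only be produced by the commutator of the top-length part $\mathbf{A}^k$ of $Q$ with the three-local density $H^{(3)}$, and only through a single-site overlap at one of the two edges (the two-local density $H^{(2)}$ and the shorter parts of $Q$ reach only length $k+1$). Thus the equations $r_{\mathbf{B}^{k+2}}=0$ involve solely the top coefficients $q_{\mathbf{A}^k}$, and I would analyze them first.

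Concretely, for $k=4$ each $\mathbf{B}^6$ has at most two preimages: a ``left'' source $\mathbf{A}^4$ starting at site $j$ that overlaps $H_{j+3}^{(3)}$, and a ``right'' source $\mathbf{A}^4$ starting at $j+2$ that overlaps $H_j^{(3)}$. Because the tail pair $(B^5B^6)$ must match the tail of one of the four terms of $H^{(3)}$, i.e.\ lie in $\{XZ,YZ,XX,YY\}$, a left source exists only for such tails; likewise the head pair $(B^1B^2)$ must lie in $\{XX,YY,ZX,ZY\}$ for a right source, and in each case the term of $H^{(3)}$ and the source string are uniquely fixed. The key step is then to isolate the \emph{single-source} strings: given $\mathbf{A}^4$, a suitable term of $H^{(3)}$ always produces (as a left source) some $\mathbf{B}^6$ whose head pair equals $(A^1A^2)$, so whenever $(A^1A^2)\notin\{XX,YY,ZX,ZY\}$ that $\mathbf{B}^6$ has no right source and $r_{\mathbf{B}^6}=0$ forces $q_{\mathbf{A}^4}=0$. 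The mirror argument through the right edge forces $q_{\mathbf{A}^4}=0$ whenever $(A^3A^4)\notin\{XZ,YZ,XX,YY\}$. After this step only a short explicit list of candidate survivors remains — those whose head and tail pairs both lie in the admissible sets — which are then related among themselves by the two-source equations.

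To close the system I would bring in the next-order equations $r_{\mathbf{B}^5}=0$. The 5-local output is generated by three mechanisms — $[\mathbf{A}^4,H^{(2)}]$ and the two-site overlaps $[\mathbf{A}^4,H^{(3)}]$, together with $[\mathbf{A}^3,H^{(3)}]$ — so choosing $\mathbf{B}^5$ that receives no contribution from $q_{\mathbf{A}^3}$ yields relations purely among the surviving $q_{\mathbf{A}^4}$; here the XXX term $H^{(2)}$ finally enters, and I expect the interplay between the constraints from $H^{(2)}$ and those from $H^{(3)}$ to overdetermine the survivors and force them all to vanish. Once $q_{\mathbf{A}^4}=0$, the remaining equations reduce the length-$\le 3$ part of $Q$ to a combination of the identity, the $U(1)$ charge $\sum_j Z_j$, and the Hamiltonian, using the already-noted fact that $H$ is the unique 3-local charge. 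The case $k=5$ proceeds identically at the top level — the 7-local strings again single out $q_{\mathbf{A}^5}$ and pin down the admissible head/tail pairs — but now each surviving string carries a free central operator $A^3\in\{I,X,Y,Z\}$, reflecting the odd length; this central site changes the structure of the closing relations and is the source of the even/odd distinction that recurs in the general proof.

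I expect the main obstacle to be this final closure rather than the top-length reduction: showing that the web of two-source relations together with the carefully selected $r_{\mathbf{B}^{k+1}}=0$ equations admits only the zero solution on the surviving candidate set requires careful tracking of signs and of which Pauli strings are ``connected,'' and it is precisely here that the failure of the XXX tower to survive the $H^{(3)}$ dressing must be made manifest. The odd-$k$ case carries the extra subtlety of the free central operator, which I would handle separately as a template for the parity split in the general argument.
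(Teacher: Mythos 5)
Your top-level reduction is sound and matches the paper's first step: the $(k+2)$-local equations involve only the length-$k$ coefficients, each $\mathbf{B}^{k+2}$ has at most a left and a right source, and the single-source argument (head pair must lie in $\{XX,YY,ZX,ZY\}$, tail pair in $\{XZ,YZ,XX,YY\}$) correctly annihilates everything outside a finite candidate list --- this is exactly what the paper's Case 1/Case 2 table analysis accomplishes. The genuine gap is the closure. For both $k=4$ and $k=5$ you stop at ``I expect the interplay between the constraints from $H^{(2)}$ and those from $H^{(3)}$ to overdetermine the survivors,'' which is precisely where all the difficulty of the theorem sits, and you supply no mechanism. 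Moreover, for $k=4$ your plan to descend to $r_{\mathbf{B}^5}=0$ is a detour the paper shows to be unnecessary: the survivors already die inside the 6-local equations, by two mechanisms your write-up misses. First, a survivor can appear in a two-source equation whose other source is \emph{not} a survivor and hence already vanishes (e.g.\ $XXXX$ pairs with $ZYZY$ in generating $ZYZXXX$; since $q_{ZYZY}=0$ one gets $q_{XXXX}=0$); your phrase ``related among themselves'' obscures that the decisive equations link survivors to already-vanished strings. Second, $ZXYZ$ pairs with its own translate with equal sign, so translation invariance turns that two-source equation into $-2q_{ZXYZ}=0$; this self-pairing is what kills the last two survivors $ZXYZ$, $ZYXZ$ without ever invoking $H^{(2)}$.

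For $k=5$, where lower-order equations genuinely are needed, your proposal gives no concrete route, and your stated tactic points the wrong way. The paper's closure is a specific computation: a 7-local equation gives $q_{ZYZXZ}+q_{XXXYZ}=0$; then the \emph{complete} list of commutators producing the 6-local operator $YXXZXZ$ is assembled --- contributions from 5-local strings via two-site overlap with $H^{(3)}$, from 5-local strings via the $-2YY$ and $-2ZZ$ parts of $H^{(2)}$ (with their factors of $2$), and crucially from the \emph{4-local} string $YXXY$ inside $Q$ --- and after discarding terms whose coefficients are already zero this yields $-q_{YXXY}-q_{ZYZXZ}=0$; a parallel computation on $XXXXXY$ gives $q_{YXXY}+q_{XXXYZ}=0$, whence $q_{XXXYZ}=q_{ZYZXZ}$, which together with the 7-local relation forces both to vanish, and connectivity of the ten survivors finishes the argument. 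Note that the length-4 coefficients of $Q$ are live unknowns at this stage: the paper does not avoid them (as your ``choose $\mathbf{B}$ receiving no contribution from the lower coefficients'' tactic would demand) but includes them and eliminates them between two equations of opposite relative sign. Without exhibiting these or equivalent relations, the proposal is a plan, not a proof.
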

We examine the $4$- and $5$-local charges in Secs. \ref{sec:absence_4} and \ref{sec:absence_5}, respectively. These cases are representative examples since the treatment of even and odd $k$-local charges differs. We present the general odd and even $k$ in Sec. \ref{sec:absence_general}.

\subsection{\label{sec:absence_4}Absence of 4-local conserved charges}     

The commutator of a 4-local operator and the Fredkin spin chain Hamiltonian can generate 6-local operators. In other words, the 6-local operator is generated from the commutator where the 4-local operator and the 3-local operator in the Hamiltonian only have one site overlap. We list all the generated 6-local operators in the Appendix \ref{App_sec: all_6-local}. See Tables \ref{table:6-local 1}-\ref{table:6-local 9}. We find that all commutators can be categorized into three cases.
\begin{itemize}
    \item\label{item:case1} Case 1: The $4$-local charge cannot form a pair in generating any possible $6$-local operators. It corresponds to the operators with rows that all $6$-local operators are uncolored in Tables \ref{table:6-local 1}-\ref{table:6-local 9}. By the requirement of conserved charge condition $[Q,H] = 0$, the coefficients of these operators have to be zero. An example is $ZIIZ$, which can generate four different $6$-local operators, namely
    \begin{align}
    \begin{split}
    \begin{matrix}
      &  &  & Z & I & I &Z \\
      -& Z & X & X &  &  & \\\hline
     - & Z & X & Y & I & I &Z,
    \end{matrix}\qquad
    \begin{matrix}
      &  &  & Z & I & I &Z \\
     - & Z & Y & Y &  &  & \\\hline
      & Z & Y & X & I & I &Z,
    \end{matrix}\\
    \begin{matrix}
      & Z & I & I & Z &  & \\
      &  &  &  & X & X &Z \\\hline
      & Z & I & I & Y & X &Z,
    \end{matrix}\qquad
    \begin{matrix}
      & Z & I & I & Z &  & \\
      &  &  &  & Y & Y &Z \\\hline
      -& Z & I & I & X & Y &Z.
    \end{matrix}
    \end{split}
    \end{align}
    However, the above generated four $6$-local operators cannot be generated by other commutators, which implies that $ZIIZ$ has a zero coefficient in the conserved charges (if it exists).. 

    \item\label{item:case2} Case 2: The $4$-local charge that can form pairs when commutating with specific Hamiltonian terms but fails to pair with other Hamiltonian terms. Charges in this case correspond to partial colored rows in Tables \ref{table:6-local 1}-\ref{table:6-local 9}. The conserved charge condition $[Q,H] = 0$ also requires the coefficients of these $4$-local charges are zero. An example in this case is the operator $ZZXZ$, which gives the following commutators
    \begin{align}
    \begin{split}
    \begin{matrix}
      &  &  & Z & Z & X &Z \\
      -& Z & X & X &  &  & \\\hline
     - & Z & X & Y & Z & X &Z,
    \end{matrix}\qquad
    \begin{matrix}
      &  &  & Z & Z & X &Z \\
     - & Z & Y & Y &  &  & \\\hline
      & Z & Y & X & Z & X &Z,
    \end{matrix}\\
    \begin{matrix}
      & Z & Z & X & Z &  & \\
      &  &  &  & X & X &Z \\\hline
      & Z & Z & X & Y & X &Z,
    \end{matrix}\qquad
    \begin{matrix}
      & Z & Z & X & Z &  & \\
      &  &  &  & Y & Y &Z \\\hline
      -& Z & Z & X & X & Y &Z.
    \end{matrix}
    \end{split}
    \end{align}
    The first two $6$-local operators, namely $ZXYZXZ$ and $ZYXZXZ$, can also be generated by $ZXYY$ and $ZYXY$, respectively. But the remaining two $6$-local operators can only be generated by $ZZXZ$, therefore we get the coefficient $q_{ZZXZ}=0$ from $[Q,H] = 0$.
    
    \item\label{item:case3} Case 3: The $4$-local charge can form pairs in generating $6$-local operators from all terms in the Hamiltonian. These charges correspond to rows where all $6$-local operators are colored in Tables \ref{table:6-local 1}-\ref{table:6-local 9}.  It leads to a set linear equations of the coefficients, and we cannot conclude that these $4$-local charges have zero coefficient directly.
\end{itemize}

Next, we focus on Case 3 and prove that charges in this case have zero coefficients. All $4$-local charges in Case 3 can be grouped into two sets. The first set in Case 3 has the operators
\begin{align}\label{4-local case3 1}
  \begin{split}
    XXXX,\; XXYY,&\; XXYZ,\; ZXYY,\\
    YYYY,\; YYXX,&\; YYXZ,\; ZYXX.
  \end{split}
\end{align}
As an example, we notice that $XXXX$ form a pair with $ZYZY$ in generating $ZYZXXX$, given by
\begin{equation}
\begin{matrix}
      &  &  & X & X & X &X \\
      -& Z & Y & Y &  &  &\\\hline
      -& Z & Y & Z & X & X &X,
\end{matrix}\qquad
\begin{matrix}
      & Z & Y & Z & Y &  & \\
      &  &  & - & Z & X &X \\\hline
     - & Z & Y & Z & X & X &X.  
\end{matrix}
\end{equation}
The two commutators yield the constraint
\begin{equation}
    -q_{XXXX}-q_{ZYZY}=0.
\end{equation}
Since $ZYZY$ belongs to the Case 2 with $q_{ZYZY}=0$, we immediately obtain $q_{XXXX} = 0$. Applying the similar analyze, we conclude that all $4$-local charges in Eq. (\ref{4-local case3 1}) have zero coefficients.

The second set in Case 3 include the operators
\begin{align}\label{4-local case3 2}
    ZXYZ,&\; ZYXZ.
\end{align} 
The above two operators always form pairs with each other when generating $6$-local operators. Their coefficients yield four linear equations
\begin{subequations}\label{four 4-local relations}
\begin{align}
 -&q_{ZXYZ}-q_{ZXYZ}=0,\\
 &q_{ZYXZ}+q_{ZYXZ}=0,\\
&q_{ZXYZ}-q_{ZYXZ}=0,\\
&q_{ZXYZ}-q_{ZYXZ}=0.   
\end{align}    
\end{subequations}
The last two equations are trivial. The first two gives the solution $q_{ZXYZ} = q_{ZYXZ} = 0$. The above analyze demonstrate that the Fredkin spin chain has no $4$-local conserved charges in the periodic boundary condition.

\subsection{\label{sec:absence_5}Absence of \texorpdfstring{$5$}{5}-local conserved charges}

First, we consider a $5$-local charge with the rightmost $X$ operator, namely $A_j^1A_{j+1}^2A_{j+2}^3\cdots X$ (similar analyze can be applied if the rightmost is $Y$ or $Z$). Only the rightmost operator $X$ matters if we consider the generated $7$-local operator from the commutator with the 3-local Hamiltonian. For example, we have
\begin{align}
\begin{matrix}\label{derive_5-local_sequence_1}
  & A_j^1 & A_{j+1}^2 & A_{j+2}^3 & \cdots & X_{j+4} &  & \\
  &  &  &  &  & Y_{j+4} & Y_{j+5} & Z_{j+6}\\\hline
  & A_j^1 & A_{j+1}^2 & A_{j+2}^3 & \cdots & Z_{j+4} & Y_{j+5} &Z_{j+6}.
\end{matrix}
\end{align}
The above $7$-local operator can also be generated by
\begin{align}\label{derive_5-local_sequence_2}
\begin{matrix}
  &  &  & \widetilde{A}_{j+2}^3 & \cdots & Z_{j+4} & Y_{j+5} &Z_{j+6}\\
  & X_j & X_{j+1} & Z_{j+2} &  &  &  & \\\hline
  & A_j^1 & A_{j+1}^2 & A_{j+2}^3 & \cdots & Z_{j+4} & Y_{j+5} &Z_{j+6}.
\end{matrix}
\end{align}
Here we take $XXZ$ in the above commutator as example. Note that $\widetilde{A}_{j+2}^3 = Y$ if $A_{j+2}^3 = X$ and $\widetilde{A}_{j+2}^3 = -X$ if $A_{j+2}^3 = Y$. It could be any term in $H_j^{(3)}$ defined in Eq. (\ref{H_tilde}). The commutator (\ref{derive_5-local_sequence_2}) tells us that $A_j^1A_{j+1}^2$ must take $XX$, while $A_{j+2}^3$ should be different from $Z$. In other words, only the operators with the left boundary sequences $XXX\cdots$ and $XXY\cdots$ can form pairs with other five-local charges by the aforementioned approach (\ref{derive_5-local_sequence_1}) and (\ref{derive_5-local_sequence_2}), therefore to be potential conserved charges. By analyzing all the Hamiltonian terms in Eq. (\ref{H_tilde}), we conclude that the operators can form pair only if they have the left sequences
\begin{align}\label{3operator sequence left}
\begin{split}
    &XXX\cdots,\quad XXY\cdots, \quad YYX\cdots, \quad YYY\cdots,\\
    &ZXY\cdots, \:\quad ZXZ\cdots, \:\quad  ZYX\cdots, \quad  ZYZ\cdots.
\end{split}
\end{align}
By analyzing the rightmost three operators similarly, we obtain that operators form pair if the rightmost three operators are
\begin{align}\label{3operator sequence right}
\begin{split}
   &\cdots XXX, \quad \cdots YXX, \quad \cdots XYY, \quad \cdots YYY,\\
    &\cdots YXZ, \:\quad \cdots ZXZ, \:\quad \cdots XYZ, \quad \cdots ZYZ.     
\end{split}
\end{align}
Combining the permitted left and right sequences, we get a set of 5-local operators which can form pair when commutating any terms in the Hamiltonian $H_j^{(3)}$ (\ref{H_tilde}). It is similar to the case 3 defined in Sec. \ref{sec:absence_4}. Charges with other boundary sequences have zero coefficients.

The set of charges with the permitted boundary sequences can also be gourped into two sets. The first set is similar to charges in Eq. (\ref{4-local case3 1}), which pair with charges in Case 2, therefore have vanishing coefficients. The left charges are
\begin{align}\label{5-local_nonzero}
\begin{split}
    XXXYZ, \quad YYYXZ, &\quad ZXYYY, \quad ZYXXX, \\
    XXYXZ, \quad YYXYZ, &\quad ZXYXX, \quad ZYXYY, \\
    ZXZYZ, &\quad ZYZXZ.
\end{split}
\end{align}
The ten charges form pairs with each other, which require further analyze. 

The strategy is to exam additional operators generated by the above ten $5$-local charges. We first notice that $ZYZXZ$ and $XXYXZ$ can form pair in generating $ZYZXYXZ$, which gives
\begin{equation}\label{5-5 relation}
    q_{ZYZXZ}-q_{XXYXZ}=0.
\end{equation}
Next, we find that all commutators generating the $5$-local operator $XIZXZ$ are
\begin{align}
    \begin{matrix}
        &X&I&Z&Y&\\
        -2&&&&Z&Z\\\hline
        -2&X&I&Z&X&Z,
    \end{matrix}
\end{align}
\begin{align}
    \begin{matrix}
        &X&X&Y&X&Z\\
        -2&&X&X&&\\\hline
        2&X&I&Z&X&Z,
    \end{matrix}\qquad
    \begin{matrix}
        &Z&Y&Z&X&Z\\
        -2&Y&Y&&&\\\hline
        2&X&I&Z&X&Z.
    \end{matrix}
\end{align}
The $4$-local charge $XIZY$ has zero coefficient because it cannot form pair with other $4$-local charges as well as charges in Eq. (\ref{5-local_nonzero}) in generating any $6$-local operator. The remaining two operators give the relation
\begin{align}
    2q_{XXYXZ}+2q_{ZYZXZ}=0.
\end{align}
Combining this equation with Eq. (\ref{5-5 relation}) gives the solution $q_{XXYXZ}=q_{ZYZXZ}=0$. Since the ten charges in Eq. (\ref{5-local_nonzero}) are connected, therefore all have a zero solution. Then we can confirm that the Fredkin spin chain has no $5$-local conserved charges.

\section{\label{sec:absence_general} Absence of general \texorpdfstring{$k$}{k}-local conserved charges}

In this section, we investigate whether general $k$-local conserved charges exist. We find that the Theorem \ref{theo:local_4_5_charge} can be generalized to
\begin{theorem}
    The Fredkin spin chain has no $k$-local conserved charges with $4\le k\le N/2$ under the periodic boundary condition.
\end{theorem}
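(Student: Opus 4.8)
The plan is to promote the two warm-up arguments of Sec.~\ref{sec:absence_4} and Sec.~\ref{sec:absence_5} into a single inductive scheme valid for every $k$ with $4\le k\le N/2$, treating even and odd $k$ separately, since the examples already show that the parity controls at which order the argument closes. Throughout I would exploit the highest-order content of $[Q,H]=0$ first, exactly as in Eq.~(\ref{[Q,H]}): a $k$-local Pauli string and a three-site term of $H_j^{(3)}$ (\ref{H_tilde}) generate a $(k+2)$-local string only through a single-site overlap at the left or the right end, so the equations $r_{\mathbf{B}^{k+2}}=0$ couple only the outermost operators of the charges. \textbf{Step 1 (reduction to Case~3).} Generalizing the boundary analysis of Sec.~\ref{sec:absence_5}, I would examine the rightmost operator and then the rightmost three operators, and symmetrically the leftmost ones; the $(k+2)$-local equations force every charge whose leftmost or rightmost triple lies outside the permitted sets (\ref{3operator sequence left}) and (\ref{3operator sequence right}) to have vanishing coefficient. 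This is the $k$-local analogue of Cases~1 and~2 and is $k$-independent, since only the three outermost sites enter the top-order commutators.

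\textbf{Step 2 (interior propagation).} Among the surviving charges---those with both boundary triples admissible---I would argue that all but a rigid ``core'' vanish. The mechanism is the one illustrated after Eq.~(\ref{4-local case3 1}): a charge whose interior can be locally altered pairs, through a top-order commutator, with a charge whose boundary triple is forbidden, and the latter is already zero by Step~1. Iterating this pairing inward---an induction on the interior sites, with the admissible-boundary structure as the propagating invariant---should collapse every admissible charge except those whose entire Pauli string is uniquely fixed by its admissible ends, namely the generalizations of $\{ZXYZ,ZYXZ\}$ in (\ref{4-local case3 2}) for even $k$ and of the ten charges in (\ref{5-local_nonzero}) for odd $k$.

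\textbf{Step 3 (core resolution, by parity).} Finally I would pin the core to zero. For even $k$ I expect the top-order relations among the core charges to be non-degenerate on their own, generalizing the system (\ref{four 4-local relations}) that already yielded $q_{ZXYZ}=q_{ZYXZ}=0$. For odd $k$ the top-order relations are degenerate, because the core charges only pair among themselves in a balanced way; following Eqs.~(\ref{5-5 relation})--(\ref{5-5 relation 2}), I would then descend to the $(k+1)$-local equations $r_{\mathbf{B}^{k+1}}=0$ and extract the extra independent relations needed to close the system. Since the core charges are mutually connected, a single vanishing coefficient propagates to all of them.

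The main obstacle I anticipate is Step~2 together with the odd-$k$ part of Step~3: one must verify that the inward pairing really terminates in a core as small and rigid as in the $k=4,5$ examples, \emph{uniformly} in $k$, and then check that the augmented top-plus-next-to-top-order linear system for the odd core has only the trivial solution. Controlling this combinatorics---essentially the connectivity graph recording which admissible strings feed a common higher-order operator---is where the bookkeeping becomes heavy, and where a clean inductive invariant is most valuable.
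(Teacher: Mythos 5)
Your overall architecture does track the paper's: your Steps 1--2 correspond to the derivation of valid boundary sequences in Sec.~\ref{sec:derive_sequence} together with the connectivity Lemmas \ref{lemma:connected} and \ref{lemma:change_A}, and your even-$k$ Step 3 is exactly the paper's resolution of Case 4 of Table \ref{table: even structure} (self-pairing of $Z\,\mathcal{\overline{A}}\cdots\mathcal{\overline{A}}\,Z$ plus translation invariance), with the remaining even-$k$ charges killed by connecting them to illegal-boundary charges, as you propose. One mild mischaracterization: the surviving set is not ``uniquely fixed by its admissible ends''---it is an exponentially large family (each interior block $\mathcal{\overline{A}}$ may independently be $XY$ or $YX$, and there are further valid sequences, Eq.~(\ref{more_valid_sequences})); this is harmless only because of the connectivity lemmas, which you do invoke.

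The genuine gap is your odd-$k$ Step 3, and it is precisely the step the paper has to work hardest for. You propose to descend one order, to the $(k+1)$-local equations $r_{\mathbf{B}^{k+1}}=0$, ``following Eqs.~(\ref{5-5 relation})--(\ref{5-5 relation 2})'' and then to ``extract the extra independent relations needed.'' But the paper states explicitly (Sec.~\ref{sec:general_odd_k}) that for odd $k$ the admissible charges do not connect to any zero-coefficient charge, and the $k=5$ warm-up does not extend as-is: the actual proof descends \emph{two} orders, to the $k$-local equations $r_{\mathbf{B}^{k}}=0$, and builds a telescoping chain of commutators (Tables \ref{odd k sequence of commutators 1} and \ref{odd k sequence of commutators 2}) whose length grows with $k$, coupling the coefficients of $k$-, $(k-1)$-, and $(k-2)$-local charges through the Heisenberg $ZZ$ term of $H_j^{(2)}$. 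Closing that system further requires intra-column identities ($q_{1,1}=q_{2,1}=\cdots$ and $q_{1,2}=-q_{2,2}=q_{3,2}=\cdots$) obtained from separate pairing arguments, and a final case split on the parity of the number of rows, yielding $2\left(\tfrac{k-3}{2}\right)q_{1,1}=0$ or $2\left(\tfrac{k-3}{2}+1\right)q_{1,1}=0$ in Eqs.~(\ref{even rows}) and (\ref{odd rows}). Nothing in your proposal produces this construction or establishes that one-level descent suffices uniformly in $k$; your own ``main obstacle'' paragraph concedes exactly this. As it stands the proposal is a correct strategic outline whose decisive step---the one that distinguishes the general odd-$k$ theorem from the $k=5$ example---is missing.
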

The proof for the even $k$ (Sec. \ref{sec:general_even_k}) and odd $k$ (Sec. \ref{sec:general_odd_k}) differs significantly, which are extensions of the methods developed for $4$-local and $5$-local charges respectively.

\subsection{\label{sec:derive_sequence} Boundary sequences of paired operators}

We start from the derivation of boundary sequences of connected operators, which generalizes the operators in Eqs. (\ref{3operator sequence left}) and (\ref{3operator sequence right}) when considering the 5-local conserved charges. For convenience, we focus on the left boundary sequences. Extending all results to the right boundary is straightforward by symmetry.   

First let us consider a $k$-local operator $(XXAA\cdots AAX)_j^k$ with arbitrary operators $A$ in the middle. The rightmost operator can be arbitrary, here we take $X$. We call it the {\itshape charge 1}. {\itshape Charge 1} has the following commutation relation (only keeping necessary subscripts)
\begin{equation}
\begin{matrix}
\label{eq:check}
  &X_j& X_{j+1} & \cdots & \cdots & X_{j+k-1} &  &  & \\
 & &  &  &  -& Z_{j+k-1} & X & X & \\\hline
 &X_j& X_{j+1}& \cdots & \cdots &  Y & X & X_{j+k+1} &.
\end{matrix}
\end{equation}

Next we try to search for another $k$-local charge that can generate the above $(k+2)$-local operator. We notice that the two leftmost operators of the above $(k+2)$-local operator is $XX$, which is consistent with the Hamiltonian term $XXZ$. This implies the $(k+2)$-local operator can also be generated by the commutator of $XXZ$ and a $k$-local charge, i.e.
\begin{equation}
\begin{matrix}
  &  &  & A_{j+2}^1 & A_{j+3}^2& \cdots &  Y & X & X_{j+k+1}  \\
  &X_j& X_{j+1}& Z_{j+2} && & & & \\\hline
&X_j& X_{j+1}& \tilde{A}_{j+2}^3 & \tilde A_{j+3}^4&\cdots & Y & X & X_{j+k+1}.
\end{matrix}
\end{equation}
We denote $A_{j+2}^1A_{j+3}^2\cdots Y_{j+k-1}X_{j+k}X_{j+k+1}$ as the {\itshape charge 2}. For {\itshape charge 2}, the boundary sequence $A_{j+2}^1 A_{j+3}^2$ must be $X_{j+2}X_{j+3}$ or $Y_{j+2}Y_{j+3}$, otherwise it will generate operators which can not form pairs. Correspondingly, the generated operator $\tilde{A}_{j+2}^3\tilde A_{j+3}^4$ equals to $X_{j+2}Y_{j+3}$ or $Y_{j+2}X_{j+3}$. It also sets the {\itshape charge 1} at sites $j+2$ and $j+3$ as $X_{j+2}Y_{j+3}$ or $Y_{j+2}X_{j+3}$. With the above constrains, the operator $(XX\cdots YXX)_j^{k+2}$ can only be generated by the above two commutators.

Applying the same analyze to {\itshape charge 2}, we confirm possible operators at sites $j+4$ and $j+5$ are $X_{j+4}Y_{j+5}$ or $Y_{j+4}X_{j+5}$ whether {\itshape charge 2} takes the form $(XX\cdots ZXX)_{j+2}^k$ or $(YY\cdots ZXX)_{j+2}^k$. Iterating this procedure generates the permitted boundary sequences. Define $\mathcal{\overline{A}}_j =X_{j}Y_{j+1}$ or $Y_jX_{j+1}$. Then we find that the operator $XX\cdots$ can form pair if it has the form
\begin{equation}
    X_jX_{j+1}\;\mathcal{\overline{A}}_{j+2}\; \mathcal{\overline{A}}_{j+4}\;\mathcal{\overline{A}}_{j+6}\cdots.
\end{equation}

A comprehensive analysis on all operators in Eq. (\ref{3operator sequence left}) gives the $k$-local charges which can form pairs. The representative examples with valid left boundary sequences are
\begin{subequations}\label{nonzero sequence}
\begin{align}
\label{nonzero sequence 1}
    X_jX_{j+1}\; \mathcal{\overline{A}}_{j+2}\; \mathcal{\overline{A}}_{j+4}\;\mathcal{\overline{A}}_{j+6}\cdots,\\
\label{nonzero sequence 2}
    Y_jY_{j+1}\;\mathcal{\overline{A}}_{j+2}\; \mathcal{\overline{A}}_{j+4}\;\mathcal{\overline{A}}_{j+6}\cdots,\\
    Z_jX_{j+1}Z_{j+2}Y_{j+3}\; \mathcal{\overline{A}}_{j+4}\;\mathcal{\overline{A}}_{j+6}\cdots,\\
    Z_jY_{j+1}Z_{j+2}X_{j+3}\; \mathcal{\overline{A}}_{j+4}\;\mathcal{\overline{A}}_{j+6}\cdots,
\end{align}
\end{subequations}
which have even number of boundary operators. The operator 
\begin{align}\label{nonzero sequenceZ}
    Z_j\;\; \mathcal{\overline{A}}_{j+1}\;\mathcal{\overline{A}}_{j+3}\;\mathcal{\overline{A}}_{j+5}\cdots
\end{align}
with odd number of boundary operator is also permitted. See Appendix \ref{App_sec:nonzero_sequence} and Eq. (\ref{more_valid_sequences}) for more valid sequences. The corresponding valid right boundary sequences can be obtained by reflecting the above left boundary sequences. According to our analysis, a charge may have a nonzero coefficient only if its left side operators takes one of above valid sequences, while right side sequence takes one of their inversion.

Before further analyze on the coefficients of above operators, we have the following two lemmas. 
\begin{lemma}\label{lemma:connected}
Operators with the left and right boundary sequences given by Eqs. (\ref{nonzero sequence}) are all connected.
\end{lemma}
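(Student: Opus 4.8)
The plan is to prove Lemma \ref{lemma:connected} by constructing, for any two operators carrying the valid boundary sequences of Eq.~(\ref{nonzero sequence}), an explicit chain of pairwise-connected operators linking them. Recall that two $k$-local charges are \emph{connected} if there exists a $(k+2)$-local Pauli string that both generate via a commutator with some Hamiltonian term in $H_j^{(3)}$. The graph whose vertices are the valid-sequence operators and whose edges are the connection relations is what we must show is connected; it suffices to exhibit a spanning set of edges, i.e. to show every valid operator can be connected to one fixed reference operator.

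The natural strategy is to work one column at a time, exploiting the local freedom already identified in Sec.~\ref{sec:derive_sequence}: at each $\mathcal{\overline{A}}$ block the sequence may independently be $X_jY_{j+1}$ or $Y_jX_{j+1}$. First I would show that \emph{flipping a single $\mathcal{\overline{A}}$ block} (exchanging $XY \leftrightarrow YX$) while leaving every other block fixed connects the two resulting operators. The commutation relations in Eqs.~(\ref{eq:check}) and the following display already demonstrate that the $\mathcal{\overline{A}}$-structure is reproduced when commuting with $(XXZ)_j$-type terms in the Hamiltonian; the same computation, applied in the interior rather than at the boundary, shows that a given $(k+2)$-local string is generated from both the $XY$ and the $YX$ choice at a chosen block, so those two operators share a common generated string and are connected. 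Iterating block-by-block reduces any valid operator to a canonical representative with every block set to, say, $X_jY_{j+1}$.

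Next I would connect operators that differ in the \emph{form of the left (or right) boundary head}, i.e. the four alternatives $XX\cdots$, $YY\cdots$, $ZXZY\cdots$, $ZYZX\cdots$ of Eq.~(\ref{nonzero sequence}), together with the odd-length head $Z\,\mathcal{\overline{A}}\cdots$ of Eq.~(\ref{nonzero sequenceZ}). This is where the explicit commutator bookkeeping of the $5$-local case in Sec.~\ref{sec:absence_5} generalizes: there the ten charges of Eq.~(\ref{5-local_nonzero}) were shown to be mutually connected through shared generated strings such as $ZYZXXYZ$ and $YXXZXZ$, and the present head-changing moves are the length-$k$ analogues of exactly those relations. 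I would verify that each head type is connected to the $XX$-head representative by a single such bridging string, so that combined with the block-flipping moves the whole vertex set collapses to one connected component.

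The main obstacle I expect is not any individual connection step — each is a finite commutator check — but ensuring the argument is genuinely \emph{uniform in $k$} and that the bridging strings used to change the head never accidentally fall outside the valid-sequence set (which would break the induction) or coincide with operators already forced to zero in a way that disconnects the graph. Concretely, one must confirm that at the junction between the fixed head and the first $\mathcal{\overline{A}}$ block the generated $(k+2)$-local string still admits a second generating partner that is itself a valid-sequence operator, so the chain never terminates prematurely; handling the parity distinction between the even-headed sequences of Eq.~(\ref{nonzero sequence}) and the odd-headed sequence of Eq.~(\ref{nonzero sequenceZ}) carefully is the delicate part. Once the block-flip and head-change moves are both established with $k$-independent commutator templates, connectedness of the full graph follows by transitivity.
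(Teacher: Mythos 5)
Your proposal contains a genuine gap, and it lies in the primitive moves you build the argument on. A $(k+2)$-local string can only be produced by a commutator in which the $3$-local Hamiltonian term overlaps the $k$-local charge at exactly one site, i.e.\ at its leftmost or rightmost site; any interior overlap yields a string of length at most $k+1$. Consequently, every $(k+2)$-local string generated by a charge contains that charge's interior verbatim, and two charges sharing such a string must be supported on windows shifted by exactly two lattice sites, with \emph{both} of their ends altered in the process (one charge loses its two leftmost operators and has its third operator changed, the other has its last operator changed and gains two operators on the right). This rules out both of your primitive moves: (i) flipping a single interior $\mathcal{\overline{A}}$ block ``while leaving every other block fixed'' can never be a one-step connection, since charges at the same position differing in the interior generate disjoint sets of $(k+2)$-local strings, and charges shifted by two must agree on the shifted interior; your appeal to ``the same computation, applied in the interior rather than at the boundary'' is precisely what the locality counting forbids; (ii) changing the head ($XX\cdots \leftrightarrow YY\cdots$, etc.) ``by a single bridging string'' while keeping the tail fixed is likewise impossible, because a single connection necessarily shifts the window and modifies the tail as well.

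The paper's proof respects this constraint by using a two-step, out-and-back bridge: the charge $(XXXY\cdots X)_j^k$ is connected to the \emph{shifted} intermediate $(YY\cdots ZYZ)_{j+2}^k$ (commuting with $Y_{j+k-1}YZ$ on the right and $X_jXZ$ on the left, Eqs.~(\ref{proof_connected_1a})--(\ref{proof_connected_1b})), and that intermediate is then connected back to $(YYXY\cdots X)_j^k$ at the original position (Eqs.~(\ref{proof_connected_2a})--(\ref{proof_connected_2b})). The head is thus changed only through an intermediate charge whose tail and position both differ. The block-flip freedom you wanted as a first step is, in the paper, the separate Lemma~\ref{lemma:change_A}, and it is obtained as a \emph{consequence} of iterating these shift moves (blocks are recycled through the boundary, where they can be rewritten), not as a stand-alone local move. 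To repair your proof you would need to replace both of your primitives with such shift-and-return chains, at which point you would essentially be reproducing the paper's argument.
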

\begin{proof}
We take $(XXXYXY\cdots X)_j^k$ and $(YYXYXY\cdots X)_j^k$ as an example. The operators $(XXXYXY\cdots X)_j^k$ and $(YYXYXY\cdots X)_j^k$ belong to operators in Eqs. (\ref{nonzero sequence 1}) and (\ref{nonzero sequence 2}) respectively. First, we find that $(XXXY\cdots X)_j^k$ forms pair with $(YY\cdots ZYZ)_{j+2}^j$, given by
\begin{subequations}\label{proof_connected_1}
\begin{equation}\label{proof_connected_1a}
\begin{matrix}
  & X_j & X & X & Y & \cdots & X_{j+k-1} &  & \\
  &  &  &  &  &  & Y_{j+k-1} & Y & Z\\\hline
  & X_j & X & X & Y & \cdots & Z & Y &Z_{j+k+1},
\end{matrix}
\end{equation}
\begin{equation}\label{proof_connected_1b}
\begin{matrix}
  &  &  & Y_{j+2} & Y & \cdots & Z & Y & Z_{j+k+1}  \\
  & X_j & X & Z &  &  &  &  &  \\\hline
  & X_j & X & X & Y & \cdots & Z & Y & Z_{j+k+1}.
\end{matrix}
\end{equation}    
\end{subequations}
Second, $(YY\cdots ZYZ)_{j+2}^k$ forms pair with $(YYXY\cdots X)_j^k$ from the commutators
\begin{subequations}\label{proof_connected_2}
\begin{equation}\label{proof_connected_2a}
\begin{matrix}
 &  &  & Y_{j+2} & Y & \cdots & Z & Y & Z_{j+k+1}  \\
  & Y_j & Y & Z &  &  &  &  &   \\\hline
  & Y_j & Y & X & Y & \cdots & Z & Y & Z_{j+k+1}, 
\end{matrix}
\end{equation}
\begin{equation}\label{proof_connected_2b}
\begin{matrix}
  & Y_j & Y & X & Y & \cdots & X_{j+k-1} &  &  \\
  &  &  &  &  &  & Y_{j+k-1} & Y & Z  \\\hline
  & Y_j & Y & X & Y & \cdots & Z & Y & Z_{j+k+1}. 
\end{matrix}
\end{equation}  
\end{subequations}
Then we conclude that $(XXXY\cdots X)_j^k$ and $(YYXY\cdots X)_j^k$ are connected. Other operators in Eqs. (\ref{nonzero sequence}) have similar structures, therefore are all connected.
\end{proof}

From Lemma \ref{lemma:connected}, we see that either $XY$ or $YX$ operator in the middle are connected. Formally, we have
\begin{lemma}\label{lemma:change_A}
For operators in Eqs. (\ref{nonzero sequence}), (\ref{nonzero sequenceZ}) and (\ref{more_valid_sequences}), replacing $\mathcal{\overline{A}}_{j} =X_jY_{j+1}$ to $Y_{j}X_{j+1}$ or $\mathcal{\overline{A}}_{j} =Y_jX_{j+1}$ to $X_{j}Y_{j+1}$ preserves connectivity between the corresponding two operators.
\end{lemma}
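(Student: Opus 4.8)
The plan is to prove Lemma~\ref{lemma:change_A} by exhibiting a single pair of commutators that realizes the flip directly, using the two-site XXX term $H_j^{(2)}$ rather than the three-site terms employed in Lemma~\ref{lemma:connected}. Write $O$ for the operator carrying $\mathcal{\overline{A}}_m = X_mY_{m+1}$ and $O'$ for the one carrying $\mathcal{\overline{A}}_m = Y_mX_{m+1}$; by hypothesis these agree on every site other than $m$ and $m+1$. To certify that they are connected it suffices, by the definition in Sec.~\ref{sec:preli}, to produce one basis operator generated by both. First I would restrict attention to the bond $(m,m+1)$, since any Hamiltonian term supported there leaves the common tails of $O$ and $O'$ untouched and merely carries them along in the column expression.

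The mechanism is that $X_mY_{m+1}$ and $Y_mX_{m+1}$ can each be collapsed onto the \emph{same} single-site operator $Z_m$ by a suitable XXX bond term. Explicitly, $[X_mY_{m+1},\,Y_mY_{m+1}]=2iZ_m$ and $[Y_mX_{m+1},\,X_mX_{m+1}]=-2iZ_m$, so commuting $O$ with $Y_mY_{m+1}$ and $O'$ with $X_mX_{m+1}$ both yield the identical Pauli string, namely $O$ with its block replaced by $Z_m$ and an identity inserted at site $m+1$ (up to the usual $\pm 2i$ prefactor, which is dropped in the column expression). I would display these two column expressions side by side; they show that $O$ and $O'$ form a pair generating this target, hence are connected. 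Because the XXX term acts only on sites $m$ and $m+1$, the neighbouring operators $c_{m-1}$ and $d_{m+2}$ — whether they are letters of an adjacent $\mathcal{\overline{A}}$ block, of the fixed prefixes $X_jX_{j+1}$, $Y_jY_{j+1}$, $Z_jX_{j+1}Z_{j+2}Y_{j+3}$, $Z_jY_{j+1}Z_{j+2}X_{j+3}$ in Eqs.~(\ref{nonzero sequence}), or of the odd $Z_j$ head in Eq.~(\ref{nonzero sequenceZ}) — are completely inert, so the same construction applies uniformly to every family covered by the statement, including the sequences of Eq.~(\ref{more_valid_sequences}).

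The delicate point I expect to be the crux is the site-selection that makes the two targets coincide: the flipped pair $X_mY_{m+1}$ and $Y_mX_{m+1}$ both annihilate $Z_mZ_{m+1}$, and each reduces against $X_mX_{m+1}$ and $Y_mY_{m+1}$ to a single $Z$, so one must choose the bond terms ($Y_mY_{m+1}$ for $O$, $X_mX_{m+1}$ for $O'$) that deposit that $Z$ on the \emph{common} site $m$. This is precisely why the XXX term is the right tool for an $X_mY_{m+1}\leftrightarrow Y_mX_{m+1}$ flip, whereas it cannot handle the $X_jX_{j+1}\leftrightarrow Y_jY_{j+1}$ boundary difference of Lemma~\ref{lemma:connected}: those blocks commute with every XXX bond term and so must be bridged through the three-site terms instead. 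The only remaining bookkeeping is the extreme-right edge case, where flipping the last block turns the rightmost site into an identity and the common target shortens by two rather than acquiring an interior hole; the generating pair still exists, so connectivity is preserved. Collecting these cases and invoking translation invariance establishes Lemma~\ref{lemma:change_A}.
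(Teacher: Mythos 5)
Your construction does not establish ``connectivity'' in the sense the paper actually uses. Being connected is not merely ``the two charges share a common descendant'': in Sec.~\ref{sec:preli} the pair $(ZXYZ)_j^4$ and $(ZYXZ)_{j+2}^4$ is called connected precisely because the descendant $(ZXYYXZ)_j^6$ \emph{can only be generated by those two commutators}, so that $r_{\mathbf{B}}=0$ collapses to a two-term relation between $q_{ZXYZ}$ and $q_{ZYXZ}$. That exclusivity is automatic throughout the paper because it works almost exclusively with descendants of \emph{maximal} length $k+2$, which can only arise from a $k$-local charge and a three-site term overlapping on a single site. Your XXX-bond commutators are fully contained in the support, so the common target is a length-$k$ string (with $Z_m I_{m+1}$ inserted), and such a non-maximal string has many parents besides your two. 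Concretely, when the flipped block is adjacent to the terminal $Z$ (as in the Case-4 charges $Z\,\mathcal{\overline{A}}\cdots\mathcal{\overline{A}}\,Z$), the $(k-2)$-local charges $\alpha\, X_m Y_{m+1}$ and $\alpha\, Y_m X_{m+1}$ commuted with the three-site terms $Y_mY_{m+1}Z_{m+2}$ and $X_mX_{m+1}Z_{m+2}$ produce exactly the same target $\alpha\, Z_m I_{m+1} Z_{m+2}$; for an interior block with $\mathcal{\overline{A}}_{m+2}=Y_{m+2}X_{m+3}$, the parent $\cdots Z_m X_{m+1} Z_{m+2} X_{m+3}\cdots$ against the bond term $X_{m+1}X_{m+2}$ also lands on it, as do parents carrying interior identities against three-site terms. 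Hence $r_T=0$ is a many-term relation, not $q_O\pm q_{O'}=0$, and your side-by-side column display proves nothing until you (i) enumerate \emph{all} commutators generating the target---as the paper does whenever it descends below maximal length, cf.\ the five-commutator enumeration for $YXXZXZ$ in Sec.~\ref{sec:absence_5} and the tabulated system of Sec.~\ref{sec:general_odd_k}---and (ii) show that every extra parent has vanishing coefficient (most contain invalid patterns such as an interior $ZXZ$, or are strings ending in a bare block, so this is plausible, but it is exactly the content your proposal skips).

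For comparison, the paper obtains Lemma~\ref{lemma:change_A} as a byproduct of Lemma~\ref{lemma:connected}: the shifting pair construction with the three-site terms $XXZ$ and $YYZ$ (one-site overlap, hence $(k+2)$-local descendants with exactly two parents) reassembles the charge at a displaced window, and the letters filling the newly exposed block positions may be chosen as $XY$ or $YX$ independently at each step, so iterating the shifts flips any given $\mathcal{\overline{A}}$ while never leaving the regime of two-term relations. Two smaller points: your ``extreme-right edge case'' does not occur, since in all valid sequences of Eqs.~(\ref{nonzero sequence}), (\ref{nonzero sequenceZ}) and (\ref{more_valid_sequences}) the blocks are capped by boundary letters at both ends, so a flipped block never occupies the final two sites; and the ``crux'' you identify (choosing the bond terms so both targets deposit $Z$ on the common site $m$) is indeed correct algebra, but it is not where the difficulty lies---the difficulty is the multiplicity of generators of a non-maximal-length target, which your argument leaves unaddressed.
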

Lemmas \ref{lemma:connected} and \ref{lemma:change_A} imply that determining any coefficient of operators in Eqs. (\ref{nonzero sequence}), (\ref{nonzero sequenceZ}) and (\ref{more_valid_sequences}) can set the coefficients of the rest operators. 

\subsection{\label{sec:general_even_k}Absence of \texorpdfstring{$k$}{k}-local conserved charges for even \texorpdfstring{$k$}{k}}

\begin{table}[!ht]
    \centering
    \caption{The structure of charges with even length. The charges have valid boundary sequences from Eqs. (\ref{nonzero sequence}) and (\ref{nonzero sequenceZ}) with $A\in \{X,Y\}$. Here ``even" means that the boundary sequences are from Eq. (\ref{nonzero sequence}), and ``odd" is from Eq. (\ref{nonzero sequenceZ}).}
    \begin{tabular}{c|cc|cc}
    \hline
       ~Case~ & ~Left boundary~ & ~Left part~ & ~Right part~ & ~Right boundary~ \\ \hline
       1& even & $\mathcal{\overline{A}}\cdots\mathcal{\overline{A}}$ & $\mathcal{\overline{A}}\cdots\mathcal{\overline{A}}$ & even \\ 
       2& even & $\mathcal{\overline{A}}\cdots\mathcal{\overline{A}}$ & $A\;\mathcal{\overline{A}}\cdots\mathcal{\overline{A}}$ & odd \\ 
       3& even &$\mathcal{\overline{A}}\cdots\mathcal{\overline{A}}\;A$~ & ~$A\;\mathcal{\overline{A}}\cdots\mathcal{\overline{A}}$ & even \\ \hline
       4& odd & $\mathcal{\overline{A}}\cdots\mathcal{\overline{A}}$ & $\mathcal{\overline{A}}\cdots\mathcal{\overline{A}}$ & odd \\ 
       5& odd & $\mathcal{\overline{A}}\cdots\mathcal{\overline{A}}\;A$~ & ~$A\;\mathcal{\overline{A}}\cdots\mathcal{\overline{A}}$ & odd \\ \hline
    \end{tabular}
    \label{table: even structure}
\end{table}

Operators with the boundary sequences given by Eqs. (\ref{nonzero sequence}) and (\ref{nonzero sequenceZ}) are potential conserved charges. The boundary sequences in Eq. (\ref{nonzero sequence}) are even, while being odd if it is given by Eq. (\ref{nonzero sequenceZ}). There are multiple ways for combining together the permitted left and right boundary operators, which give a $k$-local charge. First, we consider the $k$-local charge with even $k$. All possible structures are classified in Table \ref{table: even structure}. We have omitted the cases with the reflection symmetry. For example, if the left boundary sequence is given by Eq. (\ref{nonzero sequenceZ}), the right boundary sequence is either from Eq. (\ref{nonzero sequence}) with a middle isolated operator denoted as $A$, or also from Eq. (\ref{nonzero sequenceZ}). We now demonstrate that Cases 2, 3 and 5 either are equivalent to Case 1 or 4, or have zero coefficient in the conserved charges.

\begin{itemize}
    \item Case 2: We take the operator
    \begin{equation}
        (XX\mathcal{\overline{A}}_{j+2}\cdots | A_{j+m}\mathcal{\overline{A}}_{j+m+1}\cdots Z)_j^k,\nonumber
    \end{equation}
    as en example, where $A_{j+m}\in\{X,Y\}$ and $m$ is even. Here ``$|$" separates the left and right boundary sequences. If $A_{j+m} = X_{j+m}$ and $\mathcal{\overline{A}}_{j+m+1}=X_{j+m+1}Y_{j+m+2}$, we have the sequence $X_{j+m}X_{j+m+1}$ in the operator, which is not permitted. In other words, operators with the sequence $X_{j+m}X_{j+m+1}$ would pair to a zero coefficient charge or generate operators that can not be paired. In either case, it has a zero coefficient. Suppose $A_{j+m} = X_{j+m}$ and $\mathcal{\overline{A}}_{j+m+1}=Y_{j+m+1}X_{j+m+2}$. According to Lemma \ref{lemma:change_A}, we can swap $Y_{j+m+1}X_{j+m+2}$ to $X_{j+m+1}Y_{j+m+2}$, then an illegal sequence $X_{j+m}X_{j+m+1}$ exists. Similarly, if $A_{j+m} = Y_{j+m}$, whether $\mathcal{\overline{A}}_{j+m+1}=Y_{j+m+1}X_{j+m+2}$ or $\mathcal{\overline{A}}_{j+m+1}=X_{j+m+1}Y_{j+m+2}$, we obtain the illegal sequence $\mathcal{\overline{A}}_{j+m}=Y_{j+m}Y_{j+m+1}$. 
    
    \item Case 3: If the two ``$A$" on both sides construct $XY$ or $YX$, it is equivalent to the Case 1. If they construct $XX$ and $YY$, then it is an illegal sequence resulting a zero coefficient.
    
    \item Case 5: Similar to Case 3, if the two ``$A$" on both sides construct $XY$ or $YX$, it is equivalent to Case 4. If they construct $XX$ or $YY$, the operator is illegal. 
\end{itemize}

Essentially, two charges are connected can be regard as changing the rightmost (leftmost) operator and adding two operators on the right (left), while removing the leftmost (rightmost) two operators and changing the third (third from last) operator. For example, the charge $(XXXY\cdots X)_j^k$ in Eq. (\ref{proof_connected_1a}) connected with the charge $(YY\cdots ZYZ)_{j+2}^k$ in Eq (\ref{proof_connected_1b}) can be regarded as performing the following operations on the charge $(XXXY\cdots X)_j^k$: changing the last operator from $X$ to $Z$ and adding $YZ$ on the right (these two operations are caused by commutating with the Hamiltonian term $YYZ$), while removing the first two operators $XX$ and changing the third operator from $X$ to $Y$ (these two operations are caused by commutating with the Hamiltonian term $XXZ$).

Next, we analyze the charges classified in Case 1. Take the example
\begin{equation}\label{even case 1 example}
    (ZXZY\mathcal{\overline{A}}_{j+4}\cdots \mathcal{\overline{A}}_{j+k-4}XX)_j^k.
\end{equation}
Applying the transformation of operator connections described above, we obtain a charge $(AA\cdots ZXZA)_{j-k+4}^k$ with $A\in \{X, Y, Z\}$. It has an illegal right boundary sequence due to the rightmost $Z$. Since all charges in Case 1 are connected due to Lemma \ref{lemma:connected}, we conclude that their coefficients are zero. 

The charges in Case 4 takes the form
\begin{equation}
Z_j\mathcal{\overline{A}}_{j+1}\mathcal{\overline{A}}_{j+3}\cdots\cdots\mathcal{\overline{A}}_{j+k-5}\mathcal{\overline{A}}_{j+k-3}Z_{j+k-1},
\end{equation}
where $\mathcal{\overline{A}}_k=X_kY_{k+1}$ or $\mathcal{\overline{A}}_k=Y_{k}X_{k+1}$. The charge in the above structure exhibit self-pairing behavior. For example, we have
\begin{align}
\begin{matrix}
  & Z_j & X & Y & \cdots & X & Y & Z_{j+k-1} &  & \\
  &  &  &  &  &  &  & Y_{j+k-1} & Y & Z\\\hline
 - & Z_j & X & Y & \cdots & X & Y & X & Y & Z_{j+k+1},
\end{matrix}
\end{align}
\begin{align}
\begin{matrix}
  &  &  & Z_{j+2} & X & Y & \cdots & X & Y &Z_{j+k+1} \\
 - & Z_j & X & X &  &  &  &  &  & \\\hline
- & Z_j & X & Y & X & Y & \cdots & X & Y & Z_{j+k+1}.
\end{matrix}
\end{align}
The two commutators give the coefficient relation
\begin{equation}
    -q_{(ZXY\cdots XYZ)_j^k}-q_{(ZXY\cdots XYZ)_{j+2}^k}=0,
\end{equation}
which is similar to Eq. (\ref{four 4-local relations}). Due to the translation invariance, the above relation gives the solution $q_{ZXY\cdots XYZ}=0$. We thus confirm that all charges in Case 4 have zero coefficients. Combined with the absence of conserved charges in Case 1, we conclude that the Fredkin spin chain admits no $k$-local conserved charge for even $k$. 

\subsection{\label{sec:general_odd_k}Absence of \texorpdfstring{$k$}{k}-local conserved charges for odd \texorpdfstring{$k$}{k}}

We analyze the $k$-local charge with odd $k$ in this subsection. All possible charge structures with valid boundary sequences given by Eqs. (\ref{nonzero sequence}) and (\ref{nonzero sequenceZ}) are classified in Table \ref{table: odd structure}. We have omitted the cases obtained from the reflection symmetry. 

\begin{table}[!ht]
    \centering
    \caption{The structure of charges with odd length. The charges have the valid boundary sequences from Eqs. (\ref{nonzero sequence}) and (\ref{nonzero sequenceZ}) with $A\in \{X,Y\}$.}
    \begin{tabular}{c|cc|cc}
    \hline
       ~Case~ & ~Left boundary~ & ~Left part~ & ~Right part~ & ~Right boundary~ \\ \hline
       1& even & $\mathcal{\overline{A}}\cdots\mathcal{\overline{A}}$ & $\mathcal{\overline{A}}\cdots\mathcal{\overline{A}}$ & odd \\ 
       2& even & $\mathcal{\overline{A}}\cdots\mathcal{\overline{A}}\;A$ & $\mathcal{\overline{A}}\cdots\mathcal{\overline{A}}$ & even \\ 
       3& even &$\mathcal{\overline{A}}\cdots\mathcal{\overline{A}}\;A$~ & ~$A\;\mathcal{\overline{A}}\cdots\mathcal{\overline{A}}$ & odd \\ 
       4& odd & $\mathcal{\overline{A}}\cdots\mathcal{\overline{A}}\;A$~ & ~$\mathcal{\overline{A}}\cdots\mathcal{\overline{A}}$ & odd \\ \hline
    \end{tabular}
    \label{table: odd structure}
\end{table}

Similar to the even $k$ cases, the Case 2-4 either are equivalent to Case 1 or contain illegal sequences. For Case 2, an example is 
\begin{equation}
    (XX\cdots \mathcal{\overline{A}}_{j+m-2}\;|\;A_{j+m}\;X_{j+m+1}Y_{j+m+2}\cdots XX )_j^k,
\end{equation}
 where $A_{j+m}\in\{X,Y\}$ and $m$ is even. If $A_{j+m}=X_{j+m}$, it has the illegal sequence $X_{j+m}X_{j+m+1}$. If $A_{j+m}=Y_{j+m}$, we can swap $X_{j+m+1}Y_{j+m+2}$ to $Y_{j+m+1}X_{j+m+2}$ which also leads to an illegal sequence $Y_{j+m}Y_{j+m+1}$. Case 4 can also be handled in a similar way. For Case 3, an example is 
\begin{equation}
    (XX\;\mathcal{\overline{A}}\cdots A_{j+m}\;|\;A_{j+m+1}\cdots \mathcal{\overline{A}}\; Z)_j^k,
\end{equation}
where $A\in\{X,Y\}$ and $m$ is even. If $A_{j+m}A_{j+m+1}$ takes $X_{j+m}X_{j+m+1}$ or $Y_{j+m}Y_{j+m+1}$, this sequence is illegal. If $A_{j+m}A_{j+m+1}$ takes $X_{j+m}Y_{j+m+1}$ or $Y_{j+m}X_{j+m+1}$, it is equivalent to Case 1.

The strategy developed for even $k$ fails to determine the coefficients of charges in Case 1. In other words, we find that $k$-local charges with odd number $k$ do not connect to any charge with a zero coefficient. Our strategy is to extend the analyze used in the $5$-local case, by examining the complete set of commutators for specific operators. Inspired by the methods in \cite{shiraishi2024absenceXIX}, we construct a series of commutators and solve the corresponding coefficients relations. 

Consider the $k$-local operator 
\begin{equation}\label{selected_odd_length_charge}
    (XX\;\mathcal{\overline{A}}\;\mathcal{\overline{A}}\cdots \mathcal{\overline{A}}\;\mathcal{\overline{A}}\;Z)_j^k,
\end{equation}
where we set $\overline{\mathcal{A}} = YX$ without loss of generality. It has the following commutator
\begin{equation}\label{odd_special_charge_1} 
\begin{array}{ccccccccccc}
     &X_{j}&X&Y&X&\cdots&Y&X&Y&X&Z_{j+k-1} \\
     &&&&&&&Y&Y&&\\\hline
     &X_{j}&X&Y&X&\cdots&Y&Z&I&X&Z_{j+k-1}.
\end{array}
\end{equation}
It is connected to other charges through the commutators
\begin{subequations}
\begin{equation}\label{odd_special_charge_3} 
\begin{array}{ccccccccccc}
     &X_{j}&X&Y&X&\cdots&Y&Z&I&Y_{j+k-2}& \\
     &&&&&&&&&Z&Z\\\hline
     &X_{j}&X&Y&X&\cdots&Y&Z&I&X&Z_{j+k-1},
\end{array}
\end{equation}
\begin{equation}\label{odd_special_charge_4} 
\begin{array}{ccccccccccc}
     &&&X_{j+2}&X&\cdots&Y&Z&I&X&Z_{j+k-1} \\
     &X_{j}&X&Z&&&&&&&\\\hline
     -&X_{j}&X&Y&X&\cdots&Y&Z&I&X&Z_{j+k-1}.
\end{array} 
\end{equation}
\end{subequations}
Here we temporarily ignore the coefficient $-2$ of Hamiltonian terms $YY$ and $ZZ$, and we will take it into account when considering the coefficients relations later. Note that the charge in Eq. (\ref{odd_special_charge_3}) are $k-1$-local, and the charge in Eq. (\ref{odd_special_charge_4}) is $(k-2)$-local. 

The next task is to determine the coefficients of charges in Eqs. (\ref{odd_special_charge_3}) and (\ref{odd_special_charge_4}), therefore solving the coefficient of operator $(XX\;\mathcal{\overline{A}}\;\mathcal{\overline{A}}\cdots \mathcal{\overline{A}}\;\mathcal{\overline{A}}\;Z)_j^k$. 

Before we proceed, we define the following three notations, which also has been applied in \cite{shiraishi2024absenceXIX}, to simplify the expressions of commutators. 
\begin{itemize}
    \item The symbol ``$\underset{yy}{\up}$" denotes the commutator between an operator and $YY$ term. For example, $\cdots\mathcal{\overline{A}}_{j+2}\underset{yy}{\up}\mathcal{\overline{A}}_{j+4}\cdots$ signifies that the first ``$Y$" acts on the second operator of $\mathcal{\overline{A}}_{j+2}$ and the second ``$Y$" acts on the first operator of $\mathcal{\overline{A}}_{j+4}$.

    \item The symbol ``$\overset{yy}{|}$" represents multiplication of ``$YY$" on the left and right operators of $|$.

    \item The symbols ``$\leftarrow$" and ``$\to$" denote that commutators act at the rightmost and leftmost sites, respectively.
\end{itemize}

The above defined notations can rewrite the commutators (\ref{odd_special_charge_1}), (\ref{odd_special_charge_3}) and (\ref{odd_special_charge_4}) as
\begin{widetext}
\begin{subequations}\label{row1}
\begin{align}
\label{row1_1}
\left [ (XXYX\cdots YXYXZ)_j^k,Y_{j+k-4}Y  \right ] = X_{j}X\mathcal{\overline{A}}_{j+2}\mathcal{\overline{A}}_{j+4}\cdots \mathcal{\overline{A}}_{j+k-5}\underset{yy}{\up}\mathcal{\overline{A}}_{j+k-3}Z_{j+k-1},  
\end{align}
\begin{align}
\label{row1_2}
\left [ (XXYX\cdots YZIY)_j^{k-1}, Z_{j+k-2}Z  \right ] = X_{j}X\mathcal{\overline{A}}\cdots\mathcal{\overline{A}}_{j+k-5}\overset{yy}{|}Y_{j+k-3}Y_{j+k-2}\leftarrow Z_{j+k-2}Z,
\end{align}
\begin{align}
\label{row1_3}
\left [(XX\cdots YZIXZ)_{j+2}^{k-2}, X_jXZ \right ] = X_jXZ\to X_{j+2}X\mathcal{\overline{A}}\cdots \mathcal{\overline{A}}_{j+k-5}\overset{yy}{|}\mathcal{\overline{A}}_{j+k-3}Z_{j+k-1}.
\end{align}
\end{subequations}
\end{widetext}

Next we observe that the operator
\begin{equation}
    X_{j+2}X\mathcal{\overline{A}}\cdots \mathcal{\overline{A}}_{j+k-5}\overset{yy}{|}\mathcal{\overline{A}}_{j+k-3}\mathcal{\overline{A}}Z_{j+k+1},
\end{equation}
can only be generated from the commutators
\allowdisplaybreaks
\begin{subequations}\label{row2}
\begin{gather}
\label{row2_1}
    X_{j+2}X\mathcal{\overline{A}}\cdots \mathcal{\overline{A}}_{j+k-5}\underset{yy}{\up}\mathcal{\overline{A}}_{j+k-3}\mathcal{\overline{A}}Z_{j+k+1},\\
\label{row2_2}
    X_{j+2}X\mathcal{\overline{A}}\cdots\mathcal{\overline{A}}_{j+k-5}\overset{yy}{|}\mathcal{\overline{A}}_{j+k-3}YY_{j+k}\leftarrow Z_{j+k}Z,\\
\label{row2_3}
    X_{j+2}X\mathcal{\overline{A}}\cdots \mathcal{\overline{A}}_{j+k-5}\overset{yy}{|}\mathcal{\overline{A}}_{j+k-3}Z_{j+k-1}\leftarrow X_{j+k-1}XZ,\\
\label{row2_4}
    X_{j+2}XZ\to X_{j+4}X \mathcal{\overline{A}}\cdots \mathcal{\overline{A}}_{j+k-5}\overset{yy}{|}\mathcal{\overline{A}}_{j+k-3} \mathcal{\overline{A}}Z_{j+k+1},\\
\label{row2_additional}
   X_{j+2}X\mathcal{\overline{A}}\cdots \mathcal{\overline{A}}_{j+k-5}\overset{yy}{|}\mathcal{\overline{A}}_{j+k-3}IZ_{j+k}\leftarrow Y_{j+k-1}YZ.
\end{gather}
\end{subequations}
Here the charge in Eq. (\ref{row2_1}) is $k$-local; charges in Eqs. (\ref{row2_2}) and (\ref{row2_additional}) is $(k-1)$-local; charges in Eq. (\ref{row2_3})-(\ref{row2_4}) are $(k-2)$-local. 

We notice that the charge in (\ref{row2_additional}) forms pair with a $k$-local charge $(XXYX\cdots YZIZY)_j^k$ through the commutators
\begin{subequations}
\begin{align}
    \begin{array}{cccccccccc}
        &&&X_{j+2}&X&\cdots &\mathcal{\overline{A}}\overset{yy}{|}&\mathcal{\overline{A}}&I&Z_{j+k}\\
        &X_j&X&Z&&&&&&\\\hline
        -&X_j&X&Y&X&\cdots&\mathcal{\overline{A}}\overset{yy}{|}&\mathcal{\overline{A}}&I&Z_{j+k},
    \end{array}
\end{align}
\vspace{0.4cm}
\begin{align}
    \begin{array}{ccccccccc}
       & X_j & X& \cdots&\mathcal{\overline{A}}\overset{yy}{|}& Y&Z&Y_{j+k-1}&\\
         &&&&&&Y_{j+k-2}&Y&Z\\\hline 
          -& X_j & X &\cdots&\mathcal{\overline{A}}\overset{yy}{|}& Y&X&I&Z_{j+k}.
    \end{array}
\end{align}
\end{subequations}
This $k$-local charge has zero coefficient because of the illegal boundary sequences. Therefore the charge in Eq. (\ref{row2_additional}) has zero coefficient. With similar method, we can omit the contribution of commutators with structure similar to Eq. (\ref{row2_additional}).

We further notice that the charge in Eq. (\ref{row2_3}) is identical to the charge in Eq. (\ref{row1_3}). We can iteratively find the charges connected to the charge in Eq. (\ref{row2_4}). Eventually we obtain a series of connected charges therefore a set of equations of their coefficients. See Tables \ref{odd k sequence of commutators 1} and \ref{odd k sequence of commutators 2}.

\begin{table*}[!ht]
    \centering
    \caption{Series of commutators generated the same operators. The commutators in each row generate the same operator, and thus produces a linear relation. The Column 3rd and 4th are shown in Table \ref{odd k sequence of commutators 2}.}
    \begin{tabular}{c|c|c}
    \hline
        ~Row~ & Column 1st & Column 2nd \\ \hline
        1st & $X_{j}X\mathcal{\overline{A}}\cdots \mathcal{\overline{A}}_{j+k-5}\underset{yy}{\up}\mathcal{\overline{A}}_{j+k-3}Z_{j+k-1}$ & $X_{j}X\mathcal{\overline{A}}\cdots\mathcal{\overline{A}}_{j+k-5}\overset{yy}{|}Y_{j+k-3}Y_{j+k-2} \leftarrow Z_{j+k-2}Z$ \\ 
        2nd & $X_{j+2}X\mathcal{\overline{A}}\cdots\mathcal{\overline{A}}_{j+k-5}\underset{yy}{\up}\mathcal{\overline{A}}_{j+k-3}\mathcal{\overline{A}}Z_{j+k+1}$ & $X_{j+2}X\mathcal{\overline{A}}\cdots\mathcal{\overline{A}}_{j+k-5}\overset{yy}{|}\mathcal{\overline{A}}_{j+k-3}YY_{j+k}\leftarrow Z_{j+k}Z$ \\ 
        3rd & $X_{j+4}X\mathcal{\overline{A}}\cdots\mathcal{\overline{A}}_{j+k-5}\underset{yy}{\up}\mathcal{\overline{A}}_{j+k-3}\mathcal{\overline{A}}\mathcal{\overline{A}}Z_{j+k+3}$ & $X_{j+4}X\mathcal{\overline{A}}\cdots\mathcal{\overline{A}}_{j+k-5}\overset{yy}{|}\mathcal{\overline{A}}_{j+k-3}\mathcal{\overline{A}}YY_{j+k+2}\leftarrow Z_{j+k+2}Z$ \\ 
        $\vdots$ & $\vdots$ & $\vdots$ \\
        $\vdots$ & ~$X_{j+k-9}X\mathcal{\overline{A}}\mathcal{\overline{A}}_{j+k-5}\underset{yy}{\up}\mathcal{\overline{A}}_{j+k-3}\cdots \mathcal{\overline{A}}_{j+2k-12}Z_{j+2k-10}$~ & ~$X_{j+k-9}X\mathcal{\overline{A}}\mathcal{\overline{A}}_{j+k-5}\overset{yy}{|}\mathcal{\overline{A}}_{j+k-3}\cdots \mathcal{\overline{A}}YY_{j+2k-11}\leftarrow Z_{j+2k-11}Z$~ \\
        $\vdots$ & $X_{j+k-7}X\mathcal{\overline{A}}_{j+k-5}\underset{yy}{\up}\mathcal{\overline{A}}_{j+k-3}\cdots \mathcal{\overline{A}}_{j+2k-10}Z_{j+2k-8}$ & $X_{j+k-7}X\mathcal{\overline{A}}_{j+k-5}\overset{yy}{|}\mathcal{\overline{A}}_{j+k-3}\cdots \mathcal{\overline{A}}YY_{j+2k-9}\leftarrow Z_{j+2k-9}Z$ \\
        $\vdots$ & $X_{j+k-5}X\underset{yy}{\up}\mathcal{\overline{A}}_{j+k-3}\cdots \mathcal{\overline{A}}_{j+2k-8}Z_{j+2k-6}$ & $X_{j+k-5}X\overset{yy}{|}\mathcal{\overline{A}}_{j+k-3}\cdots \mathcal{\overline{A}}YY_{j+2k-7}\leftarrow Z_{j+2k-7}Z$ \\
        \hline
    \end{tabular}
    \label{odd k sequence of commutators 1}
\end{table*}

\begin{table*}[!ht]
    \centering
    \caption{The continued column 3rd and 4th of Table \ref{odd k sequence of commutators 1}.}
    \begin{tabular}{c|c|c}
    \hline
        Row & Column 3rd & Column 4th \\ \hline
        1st & ~ & $X_{j}XZ\to X_{j+2}X\mathcal{\overline{A}}\cdots\mathcal{\overline{A}}_{j+k-5}\overset{yy}{|}\mathcal{\overline{A}}_{j+k-3}Z_{j+k-1}$ \\ 
        2nd & $X_{j+2}X\mathcal{\overline{A}}\cdots\mathcal{\overline{A}}_{j+k-5}\overset{yy}{|}\mathcal{\overline{A}}_{j+k-3}Z_{j+k-1}\leftarrow X_{j+k-1}XZ$  & $X_{j+2}XZ\to X_{j+4}X\mathcal{\overline{A}}\cdots\mathcal{\overline{A}}_{j+k-5}\overset{yy}{|}\mathcal{\overline{A}}_{j+k-3}\mathcal{\overline{A}}Z_{j+k+1}$ \\ 
        3rd & $X_{j+4}X\mathcal{\overline{A}}\cdots\mathcal{\overline{A}}_{j+k-5}\overset{yy}{|}\mathcal{\overline{A}}_{j+k-3}\mathcal{\overline{A}}Z_{j+k+1}\leftarrow X_{j+k+1}XZ $ & $X_{j+4}XZ\to X_{j+6}X\mathcal{\overline{A}}\cdots \mathcal{\overline{A}}_{j+k-5}\overset{yy}{|}\mathcal{\overline{A}}_{j+k-3}\mathcal{\overline{A}}\mathcal{\overline{A}}Z_{j+k+3}$ \\ 
        $\vdots$ & $\vdots$ &$\vdots$ \\ 
        $\vdots$ & ~$X_{j+k-9}X\mathcal{\overline{A}}\mathcal{\overline{A}}_{j+k-5}\overset{yy}{|}\mathcal{\overline{A}}_{j+k-3}\cdots \mathcal{\overline{A}}Z_{j+2k-12}\leftarrow X_{j+2k-12}XZ$~ & ~$X_{j+k-9}XZ\to X_{j+k-7}X\mathcal{\overline{A}}_{j+k-5}\overset{yy}{|}\mathcal{\overline{A}}_{j+k-3}\cdots \mathcal{\overline{A}}Z_{j+2k-10}$~ \\
        $\vdots$ & $X_{j+k-7}X\mathcal{\overline{A}}_{j+k-5}\overset{yy}{|}\mathcal{\overline{A}}_{j+k-3}\cdots \mathcal{\overline{A}}Z_{j+2k-10}\leftarrow X_{j+2k-10}XZ$ & $X_{j+k-7}XZ\to X_{j+k-5}X\overset{yy}{|}\mathcal{\overline{A}}_{j+k-3}\cdots \mathcal{\overline{A}}Z_{j+2k-8}$ \\ 
        $\vdots$ & $X_{j+k-5}X\overset{yy}{|}\mathcal{\overline{A}}_{j+k-3}\cdots \mathcal{\overline{A}}Z_{j+2k-8}\leftarrow X_{j+2k-8}XZ$ & ~ \\
        \hline
    \end{tabular}
    \label{odd k sequence of commutators 2}
\end{table*}

Next we denote $q_{m,n}$ as the coefficient of the charge located in the $m$-th row and $n$-th column of the Tables \ref{odd k sequence of commutators 1} and \ref{odd k sequence of commutators 2}. For example,
\begin{subequations}
\begin{align}
    & q_{1,1}=q_{X_{j}X\mathcal{\overline{A}}\cdots \mathcal{\overline{A}}_{j+k-5}\mathcal{\overline{A}}_{j+k-3}Z_{j+k-1}},\\
    & q_{2,1}=q_{X_{j+2}X\mathcal{\overline{A}}\cdots\mathcal{\overline{A}}_{j+k-5}\mathcal{\overline{A}}_{j+k-3}\mathcal{\overline{A}}Z_{j+k+1}}.
\end{align}
\end{subequations}
Using the above notations, we obtain the following linear relations,
\begin{widetext}
\begin{align}
\begin{array}{ccccc}\label{coefficients relations oddk}
 -2 q_{1,1} & -2 q_{1,2} &  & -q_{1,4}  & =0,\\
    -2q_{2,1}&  -2 q_{2,2}& + q_{2,3} &  - q_{2,4}&=0, \\
 -2q_{3,1} & -2 q_{3,2}& +q_{3,3} & - q_{3,4}& =0,\\
 \vdots & \vdots & \vdots & \vdots & \vdots \\
 -2q_{(k-3)/2-2,1}& -2 q_{(k-3)/2-2,2} & +q_{(k-3)/2-2,3} & -q_{(k-3)/2-2,4}&=0,\\
 -2q_{(k-3)/2-1,1}& -2 q_{(k-3)/2-1,2} & +q_{(k-3)/2-1,3}&-q_{(k-3)/2-1,4}&=0,\\
 -2q_{(k-3)/2,1}&-2 q_{(k-3)/2,2}&+q_{(k-3)/2,3}&&=0.\\
\end{array}
\end{align}
\end{widetext}
As previously noted, $q_{1,4}$ and $q_{2,3}$ corresponds to the same charge. Similarly, $q_{2,4}$ and $q_{3,3}$ are also from the same charge. Therefore these linear equations are coupled and can be solved systematically. 

We next establish relations among charges in the first column, which corresponds to the $k$-local charges. Consider $q_{1,1}$ and $q_{2,1}$. the following commutators 
\begin{equation}
    \begin{matrix}
  &X_{j}&X&\mathcal{\overline{A}}&\cdots&\mathcal{\overline{A}}&Z_{j+k-1} &  & \\
  &  &  &  &  &    & X_{j+k-1} & X & Z_{j+k+1}\\\hline
  &X_{j}&X&\mathcal{\overline{A}}&\cdots&\mathcal{\overline{A}}&Y_{j+k-1}  &X & Z_{j+k+1},
\end{matrix}
\end{equation}
\begin{equation}
\begin{matrix}
  &  &  X_{j+2}&X&\mathcal{\overline{A}}&\cdots&\mathcal{\overline{A}}&Z_{j+k+1} \\
  X_{j}& X & Z_{j+2} & &  &  &   & \\\hline
  -X_{j}& X_{j+1}  &Y_{j+2}&X&\mathcal{\overline{A}}&\cdots&\mathcal{\overline{A}}&Z_{j+k+1},
\end{matrix}
\end{equation}
demonstrate that the charges corresponds to $q_{1,1}$ and $q_{2,1}$ are also connected. Specifically, we have $q_{1,1}=q_{2,1}$. Using similar analysis, we finally arrive at $q_{1,1}=q_{2,1}=\cdots=q_{(k-3)/2-1,1}$.

Next, we examine the coefficients in the second column, which correspond to $(k-1)$-local charges. For example, the charges of $q_{1,2}$ and $q_{2,2}$ are connected by
\begin{equation}
\begin{array}{ccccccccccc}
     &X_j&X&\mathcal{\overline{A}}&\cdots&\mathcal{\overline{A}}\overset{yy}{|}&Y&Y_{j+k-2}&&& \\
     &&&&&&-&Z_{j+k-2}&Y&Y&\\\hline
     -&X_j&X&\mathcal{\overline{A}}&\cdots&\mathcal{\overline{A}}\overset{yy}{|}&Y&X&Y&Y_{j+k},
\end{array}
\end{equation}
\begin{equation}
\begin{array}{cccccccccc}
     &&&X_{j+2}&X&\cdots&\mathcal{\overline{A}}\overset{yy}{|}&\mathcal{\overline{A}}&Y&Y_{j+k} \\
     &X_j&X&Z&&&&&&\\\hline
     -&X_j&X&Y&X&\cdots&\mathcal{\overline{A}}\overset{yy}{|}&\mathcal{\overline{A}}&Y&Y_{j+k},
\end{array}
\end{equation}
which gives $q_{1,2}=-q_{2,2}$. Iterating this procedures yield the relations $q_{1,2}=-q_{2,2}=q_{3,2}=-q_{4,2}\cdots$.

We further find that $q_{1,2}=0$, due to its corresponding charge cannot form a pair in generating the operator in the following commutator
\begin{align}
    \begin{array}{cccccccccc}
     &&& X_j&X&\cdots &Y&Z&I&Y_{j+k-2} \\
     &X_{j-2}&X&Z &&&&&&\\\hline
     -&X_{j-2}&X&Y&X&\cdots&Y&Z&I&Y_{j+k-2}.
    \end{array}
\end{align}
This means that the charges in the second column have zero coefficients. And the linear equations are simplified to
\begin{equation}\label{even rows}
 -2\left(\frac{k-3}{2}\right) q_{1,1}=0.
\end{equation}
We thus confirm that
\begin{equation}
     q_{1,1}=q_{X_{j}X\mathcal{\overline{A}}\cdots \mathcal{\overline{A}}_{j+k-5}\mathcal{\overline{A}}_{j+k-3}Z_{j+k-1}}=0,
\end{equation}
with $k\geq 4$. Since we have proved that all conserved charges with valid boundary sequences are connected (Lemma \ref{lemma:connected}), we conclude that Fredkin spin chain admits no $k$-local conserved charges for odd $k$.

\section{\label{sec:open_boundary_condition} Absence of boundary conserved charges}

In this section, we consider the boundary conserved charges of Fredkin spin chain with open boundary conditions $H_{\partial} = |\downarrow\rangle_1 \langle \downarrow| + |\uparrow\rangle_N \langle \uparrow|$. We establish the following theorem.
\begin{theorem}
    Under the open boundary condition $H_{\partial} = |\downarrow\rangle_1 \langle \downarrow| + |\uparrow\rangle_N \langle \uparrow|$, the Fredkin spin chain has no $k$-local boundary conserved charges with $k\le N/2$.
\end{theorem}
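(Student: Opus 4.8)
The plan is to reduce the edge term to single-site Pauli operators and then adapt the bulk machinery of Secs.~\ref{sec:absence_3_4_5}--\ref{sec:absence_general} to the finite chain, using the two boundaries to close the argument in place of translation invariance. First, writing the edge projectors in Pauli form gives
\begin{equation}
H_\partial = |\downarrow\rangle_1\langle\downarrow| + |\uparrow\rangle_N\langle\uparrow| = 1\!\!1 + \tfrac12\,(Z_N - Z_1),
\end{equation}
so that, up to an irrelevant constant, the only nontrivial boundary contribution is the pair of single-site operators $Z_1$ and $Z_N$. A $k$-local boundary charge is a sum $Q=\sum q_{\mathbf{A}_j^l}\mathbf{A}_j^l$ whose coefficients are now allowed to depend on position $j$, since translation invariance is broken by the edges. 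Because $Z_1$ and $Z_N$ act on a single site, $[Q,H_\partial]$ produces only Pauli strings of the same length as those already present in $Q$; consequently the highest-locality constraints of $[Q,H]=0$ (generated strings of length $k+2$ and $k+1$) receive no contribution from $H_\partial$ and coincide, site by site, with those analyzed for the periodic bulk. The derivation of admissible boundary sequences [Eqs.~(\ref{nonzero sequence}) and (\ref{nonzero sequenceZ})] and the connectedness statements of Lemmas~\ref{lemma:connected} and \ref{lemma:change_A} are purely local, so they apply verbatim to every string of $Q$ lying in the interior of the open chain.

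The essential new feature is that translation invariance is lost. Tracing through Secs.~\ref{sec:general_even_k} and \ref{sec:general_odd_k}, almost all of the vanishing arguments---those concluding $q_{\mathbf{A}}=0$ by connecting $\mathbf{A}$ through Lemma~\ref{lemma:connected} to a string carrying an illegal boundary sequence (Cases~1--3 of the even analysis, and the analogous reductions for odd $k$)---are purely local and remain valid on the open chain. The one place where translation invariance was genuinely used is the self-pairing relation $q_{(ZXY\cdots XYZ)_j^k}=-q_{(ZXY\cdots XYZ)_{j+2}^k}$ of the even Case~4 (and its counterpart in the odd-$k$ recursion). On the open chain I would anchor this recursion at the edge: the connected family $\{(ZXY\cdots XYZ)_j^k\}$ is a finite segment in $j$, and at its terminal member the partner string or Hamiltonian term needed to continue the self-pairing lies outside $[1,N]$ and is therefore absent, so the conserved-charge condition forces the terminal coefficient to zero. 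Propagating $q_j=-q_{j+2}$ inward along the finite segment then gives $q_j=0$ throughout, with the same conclusion for the odd-$k$ recursions $q_{m,1}=q_{m+1,1}$ and $q_{1,2}=-q_{2,2}=\cdots$.

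What remains are the finitely many strings that abut site $1$ or site $N$ and whose admissible continuation still lies inside the chain. For these I would bring in the single-site boundary operators: $[Q,Z_1]$ and $[Q,Z_N]$ generate the strings obtained from those in $Q$ by the flips $X_1\leftrightarrow Y_1$ and $X_N\leftrightarrow Y_N$, thereby coupling the surviving edge coefficients among themselves. Combining these $H_\partial$-induced relations with the missing-pairing constraints at the edge---where the three-site terms $H_1,H_2$ (and $H_{N-3},H_{N-2}$) have reduced reach, so several pairings available in the bulk are again absent---yields a closed, overdetermined linear system for the edge coefficients whose only solution is zero. Since Lemma~\ref{lemma:connected} guarantees that every admissible string is connected to one of these edge or terminal strings, all coefficients of $Q$ vanish, establishing the absence of $k$-local boundary conserved charges for $k\le N/2$.

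The main obstacle is precisely the boundary bookkeeping of the preceding paragraph. The three-site nature of the interaction means that near each edge the terms $H_1,H_2$ act on fewer-than-generic configurations, so the set of surviving pairings must be enumerated explicitly rather than inferred by translation, and this enumeration must be carried out separately for even and odd $k$, mirroring the dichotomy of Sec.~\ref{sec:absence_general}. The subtle point is to verify that the $X\leftrightarrow Y$ flips produced by $Z_1$ and $Z_N$ cannot conspire with the reduced edge pairings to support a nonzero, self-consistent boundary solution---that is, that the edge linear system is genuinely nondegenerate. Once this finite edge analysis is in place, the remainder follows mechanically from the finiteness of the connected chains and the recursions already proved for the periodic bulk.
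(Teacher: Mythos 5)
There is a genuine gap, and it sits exactly where you flag ``the main obstacle.'' First, note the scope of the theorem: the paper defines a boundary conserved charge by Eq.~(\ref{OBCQexpression}), i.e.\ a sum of Pauli strings \emph{anchored} at site $1$ or at site $N-l+1$. For such charges the paper's proof is short and does not need any of the bulk recursion machinery: pairing of two strings in generating a common $(k+2)$-local operator always requires a relative shift of the two strings (one commutes with a Hamiltonian term hanging off its right end, the other with a term hanging off its left end), which is impossible when both strings start at the same boundary site. Concretely, two anchored $k$-local strings could only produce the same $(k+2)$-local operator by commuting with three-site terms at sites $(k,k+1,k+2)$, and since the four terms $-ZXX$, $-ZYY$, $XXZ$, $YYZ$ have four \emph{distinct} two-site tails $XX$, $YY$, $XZ$, $YZ$, the generated operator determines both the Hamiltonian term and the charge uniquely. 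Hence every generated operator appears exactly once (this is what Tables~\ref{table:boundary_3-local1}--\ref{table:boundary_3-local3} verify for $k=3$), and all coefficients vanish immediately; only $k=1,2$ need a separate short argument involving $Z_1$. Your proposal instead attacks the much larger class of position-dependent charges supported anywhere on the open chain, which is not what the theorem asserts, and for that larger problem your argument does not close.

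The concrete failure is in your treatment of the self-pairing (Case~4) family. You claim that at the terminal member of the chain $\{(ZXY\cdots XYZ)_j^k\}$ the needed partner lies outside $[1,N]$, so its coefficient is forced to zero. That is not so: the string at $j=1$ still has its right-extension partner $(ZXY\cdots XYZ)_{3}^{k}$ inside the chain, giving $q_1+q_3=0$; while the commutator that would have produced an additional constraint on $q_1$ (pairing it with a string at $j=-1$) requires a Hamiltonian term on sites $-1,0,1$, which simply does not exist under open boundaries --- so that constraint is \emph{absent}, not violated. The full set of $(k+2)$-local constraints for this family is therefore $q_j+q_{j+2}=0$ for $1\le j\le N-k-1$, which admits the nonzero alternating solution $q_{j+2}=-q_j$. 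Killing it requires descending to $(k+1)$- and $k$-local constraints involving the two-site Heisenberg terms and $H_\partial$, and this --- together with the ``nondegeneracy of the edge linear system'' that you explicitly defer --- is precisely the part of the argument you have not supplied. As written, the proposal is a plan whose decisive steps are asserted rather than proved, and one of those assertions (the edge-termination of the self-pairing recursion) is false at the level of locality where you invoke it.
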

\begin{proof}
We find that the methods developed for the periodic boundary condition can be applied almost identically to the case of open boundary conditions. The only distinctions between the two cases stems from boundary effects introduced by $H_{\partial}$. Therefore our analysis focuses specifically on charges sit at boundary sites.

Under the open boundary condition, candidate boundary conserved charges take the form
\begin{equation}\label{OBCQexpression}
    Q=\sum_{l=1}^{k} \sum_{\mathbf{A}_j^l} q_{\mathbf{A}_j^l}\mathbf{A}_j^l,\quad j\in\left\{1, N-l+1\right\},
\end{equation}
where we set the operators $\mathbf{A}_j^l$ starts from the two boundaries. Since the Hamiltonian is symmetric, we only consider the left boundary with $j=1$. Extending the following results to the right boundary is straightforward, as the left and right boundary terms are both given by the $Z$ operator.

We first consider the $1$-local boundary conserved charges, which are based on the following commutators
\begin{align}
\begin{array}{r}
   X_1\\
  aZ_1 \\\hline
   -aY_1,
\end{array}\qquad
\begin{array}{r}
   Y_1\\
  aZ_1 \\\hline
   aX_1,
\end{array}\qquad
\begin{array}{rc}
 Z_1 & \\
-X_1  & X_2\\\hline
-Y_1  &X_2.
\end{array}
\end{align}
Here $a$ is a real number corresponding to coefficient of $Z_1$ in the Hamiltonian. The operators $Y_1$, $X_1$ and $Y_1X_2$ can only be generated by the above commutators, therefore can not form pair and no $1$-local boundary conserved charges. 

We next consider $k=2$ in Eq. (\ref{OBCQexpression}), which includes both $1$-local and $2$-local boundary charges. Here we take $X_1X_2$ as an example. We find that it fails to form pair in generating $Y_1 I X_3$, given by
\begin{align}
\begin{array}{cccc}
  & X_1 & X_2 & \\
 - & Z_1 & X_2 & X_3\\\hline
  & Y_1 & I_2 & X_3.
\end{array}
\end{align}
The isolated generator leads to $q_{X_1X_2}=0$. Similar analysis shows that other $2$-local boundary charges, and the $1$-local charges $X_1$ and $Y_1$ all have zero coefficients. For $Z_1$, we find that it forms pair with $X_1X_2$ through
\begin{align}
\begin{array}{cccc}
  & Z_1 &  & \\
  & Y_1 & Y_2 & Z_3\\\hline
 - & X_1 & Y_2 & Z_3,
\end{array}\qquad
\begin{array}{cccc}
  & X_1 & X_2 & \\
  & -2 & Z_2 & Z_3\\\hline
  2& X_1 & Y_2 & Z_3.
\end{array}
\end{align}
Then we obtain $q_{Z_1}=0$, since $q_{X_1X_2}=0$. This completes the proof 
that no $1$-local and $2$-local boundary conserved charge exists. 

Then we continue the analyze on $k=3$, in which includes $1,2,3$-local boundary charges simultaneously. We find that $3$-local charges can be classified into three classes based on their rightmost operator
\begin{align}\label{OBC3-local}
    A_1A_2X_3,\;
    A_1A_2Y_3,\;
    A_1A_2Z_3.
\end{align}
The commutators of the above operators with the $3$-local bulk Hamiltonian terms generate eight distinct types of $5$-local operators, given by
\begin{align}\label{OBC3to5}
\begin{array}{c}
    A_1A_2Y_3X_4X_5,\; A_1A_2Y_3Y_4Y_5,\;A_1A_2Z_3Y_4Z_5,\\
    A_1A_2X_3X_4X_5,\; A_1A_2X_3Y_4Y_5,\; A_1A_2Z_3X_4Z_5,\\
    A_1A_2Y_3X_4Z_5,\; A_1A_2X_3Y_4Z_5.
\end{array}
\end{align}
Each $5$-local operator appears only once. See Appendix \ref{App_sec:OBC_5-local}) and Tables \ref{table:boundary_3-local1}-\ref{table:boundary_3-local3}. Therefore no two $3$-local charges form pairs in generating the same 5-local operator. And we obtain $q_{\mathbf{A}_1^3}=0$ for all $\mathbf{A}_1^3$. The remaining $1$-local and $2$-local charges are also absence.

Extending the above arguments for $k$-local boundary charges is straightforward. It includes three steps.
\begin{enumerate}[(1)]
\item Consider the spin chain with $N\geq 2k$. The $(k+2)$-local boundary operators can only be generated by $k$-local boundary charges and $3$-local Hamiltonian terms.

\item Analogous to Eq. (\ref{OBC3-local}), we classify $k$-local boundary charges into three classes based on the rightmost operator.

\item All $k$-local charges cannot form pair in generating $(k+2)$-local boundary operators, and conclude that $q_{\mathbf{A}_1^k}=0$ for all $\mathbf{A}_1^k$.
\end{enumerate}

The above arguments work for the right boundary charges. We finally complete the proof that the Fredkin spin chain admits no $k$-local boundary conserved charge under the open boundary condition $H_{\partial} = |\downarrow\rangle_1 \langle \downarrow| + |\uparrow\rangle_N \langle \uparrow|$.

\end{proof}

\section{\label{sec:simplied_fredkin} Nonintegrability of simplified Fredkin spin chain}

The Motzkin spin chain, as the integer spin version of the Fredkin spin chain, posses integrable structure if certain terms in the Hamiltonian are removed \cite{Tong2020ShorMovassaghCL,Hao2022ExactSO}. Note that the Motzkin spin chain only has nearest-neighbor interactions. In this section, we examine whether similar integrable structures exist for the truncated Fredkin spin chain, in which one or several three-site interaction term is removed from the original Fredkin spin chain Hamiltonian. Our results on the truncated Fredkin spin chain can be concluded as
\begin{theorem}
    The truncated Fredkin spin chains posses local conserved charges if and only if all three-site interaction terms are removed. 
\end{theorem}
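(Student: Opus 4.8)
The statement splits into a straightforward \emph{if} direction and a substantive \emph{only if} direction, and I would keep the scope of ``local conserved charge'' identical to that of Theorems~\ref{theo:local_4_5_charge} and~2, i.e. $k$-local charges with $4\le k\le N/2$ together with the nontrivial $3$-local case (so that ubiquitous symmetry generators such as a surviving $U(1)$ charge are not counted, consistent with the preceding sections). For the \emph{if} direction, deleting all four three-site terms in $H_j^{(3)}$ (\ref{H_tilde}) leaves only $-2\sum_j(X_jX_{j+1}+Y_jY_{j+1}+Z_jZ_{j+1})$, the Heisenberg XXX chain, which is Yang--Baxter integrable and carries the standard tower of local conserved charges \cite{Faddeev1996How,Grabowski1994QuantumIO}; hence it has nontrivial local charges and the \emph{if} part is immediate. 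The work is to show that retaining any nonempty proper subset $S$ of the three-site terms destroys them.

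To organize the \emph{only if} direction I would first exploit the symmetries common to $H_{XXX}$ and the full family of three-site terms. The global $\pi/2$ spin rotation about $z$ ($X\to Y,\ Y\to -X,\ Z\to Z$) fixes $H_{XXX}$ and permutes the term types as $XXZ\leftrightarrow YYZ$ and $ZXX\leftrightarrow ZYY$, while spatial reflection fixes $H_{XXX}$ and swaps $XXZ\leftrightarrow ZXX$ and $YYZ\leftrightarrow ZYY$ (overall signs are irrelevant, being absorbed into coefficients). Together these generate a Klein four-group acting on $\{XXZ,YYZ,ZXX,ZYY\}$ as its regular, hence transitive, representation. Thus $S$ need only be treated up to this group: $|S|=1$ and $|S|=3$ each form a single orbit, and $|S|=2$ splits into exactly three orbits represented by $\{XXZ,YYZ\}$, $\{XXZ,ZXX\}$ and $\{XXZ,ZYY\}$. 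The whole direction therefore reduces to five representative Hamiltonians.

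For each representative I would rerun the machinery of Secs.~\ref{sec:derive_sequence}--\ref{sec:general_odd_k}, the key simplification being that only the \emph{retained} three-site terms raise the length of a Pauli string and so contribute to the top-degree part of $[Q,H]$. Repeating the boundary-sequence derivation of Sec.~\ref{sec:derive_sequence} with this restricted set of terms produces valid left/right sequences contained in those of Eqs.~(\ref{nonzero sequence}) and~(\ref{nonzero sequenceZ}); with fewer pairing channels the generated operators are more often isolated, so their source charges are forced to vanish exactly as in Cases~1 and~2 of Sec.~\ref{sec:absence_4}. Restricting Lemmas~\ref{lemma:connected} and~\ref{lemma:change_A} to the surviving sequences then propagates a single vanishing coefficient across each connected class, and since every class still contains an isolated-generator charge, all $k$-local coefficients with $k\ge4$ vanish; the odd-$k$ cases follow the iterated-commutator scheme of Sec.~\ref{sec:general_odd_k} and \cite{shiraishi2024absenceXIX}. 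Finally I would dispatch $k\le3$ in the warm-up style of Sec.~\ref{sec:absence_3_4_5}: the retained three-site term(s) fail to commute with the XXX charge $Q_3=\sum_j\vec\sigma_j\cdot(\vec\sigma_{j+1}\times\vec\sigma_{j+2})$, so no nontrivial $3$-local charge survives.

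The hard part is the case nearest the full model, $|S|=3$, together with the two two-term orbits that keep one term controlled on each side, where the restricted boundary-sequence set and the connectivity graph most closely resemble those of the full Fredkin chain; there I must recompute them carefully and verify that \emph{every} connected component still contains an isolated-generator charge even though fewer commutators are available to build pairs. A second subtlety is conceptual: truncation \emph{relaxes} constraints, so a priori it could create new charges; the forcing argument above rules out any non-XXX-tower charge, while the explicit $Q_3$ computation (and its length-$k$ analogue) shows that a single surviving three-site term already breaks the Heisenberg tower, so no proper truncation inherits any member of it. Once these checks close, combining the $k\ge4$ vanishing with the $k\le3$ analysis yields nonintegrability for all five representatives, and the symmetry reduction extends it to every nonempty proper $S$, completing the \emph{only if} direction.
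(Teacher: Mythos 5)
Your proposal runs on the same engine as the paper's proof: truncation deletes pairing channels, so one reruns the Shiraishi-style analysis (boundary sequences, isolated generators forcing coefficients to zero, connectivity propagating the zeros) and concludes nonintegrability whenever some three-site term survives. The differences are organizational, and they are genuine. The paper works directly from its Appendix tables: deleting a term deletes columns in Tables \ref{table:6-local 1}--\ref{table:6-local 9}, it checks that all $4$-local charges then fall into Case 1/2 of Sec.~\ref{sec:absence_4} (or reduce to Case-3 pairs handled as in Eqs.~(\ref{4-local case3 1})--(\ref{4-local case3 2})), illustrates this for the truncation retaining $-ZXX-ZYY$, and asserts that the general-$k$ analysis carries over because the truncated model ``imposes additional constraints'' on the valid sequences. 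You instead compress the case analysis with the Klein four-group generated by the $\pi/2$ rotation about $z$ and spatial reflection, cutting the fourteen nonempty proper subsets to five representatives---an economy the paper does not exploit. (One wrinkle: reflection sends $+XXZ$ to $+ZXX$, whereas $ZXX$ carries a minus sign in $H^{(3)}_j$; ``signs absorbed into coefficients'' is too glib, but composing with the $\pi$ rotation about $x$, which fixes $H_{XXX}$ and flips the sign of all four three-site terms, repairs the orbit structure.) You also make explicit two things the paper leaves tacit: the \emph{if} direction (pure XXX retains its tower), and the scope restriction excluding symmetry charges---which is genuinely needed, since the paper's own worked example, retaining $-ZXX-ZYY$, still conserves the $1$-local charge $\sum_j Z_j$, so the theorem is only true in the $4\le k\le N/2$ sense of Theorems~\ref{theo:local_4_5_charge} and~2.

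Two loose ends remain on your side. First, your $k\le 3$ dispatch---``the retained terms fail to commute with $Q_3$, hence no nontrivial $3$-local charge''---only eliminates $Q_3$ itself; an arbitrary $3$-local charge of the truncated Hamiltonian need not lie in the XXX tower, so ruling it out requires running the same commutator machinery at $k=3$ (a finite, routine check, but not the one you described). Second, your claim that the truncated valid-sequence sets are \emph{contained} in those of Eqs.~(\ref{nonzero sequence}) and (\ref{nonzero sequenceZ}) is asserted rather than proved, and it is exactly the point your own last paragraph flags as delicate: removing a Hamiltonian term removes constraints as well as pairing partners, so a charge forced to zero in the full model can in principle cease to be forced in a truncation. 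The paper's defense here is its exhaustive table check at $k=4$ plus the same one-line assertion for general $k$; your proposal defers the identical verification. So both arguments stand or fall on the same per-case computation, with yours better organized and more candid about what that computation must establish.
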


\begin{proof}
The truncation specifically targets the $3$-local Hamiltonian terms $H^{(3)}_j$ defined in Eq. (\ref{H_tilde}). We have comprehensively enumerated all possible commutators of 4-local charges in the Tables in Appendix \ref{App_sec: all_6-local}. Removing specific $3$-local Hamiltonian terms corresponds to eliminating columns in the commutators tables in Appendix \ref{App_sec: all_6-local}. We have examined that behaviors of $4$-local charges are similar to Sec.\ref{sec:absence_4} if any columns are removed. We therefore confirm that the truncated Fredkin spin chains are absence of 4-local conserved charges. 

Here, we take the example where the two terms $XXZ$ and $YYZ$ are truncated from $H^{(3)}_j$ (\ref{H_tilde}). After removing the corresponding columns in Tables \ref{table:6-local 1}-\ref{table:6-local 9}, we get fewer paired operators (blue entries). Most of the $4$-local charges now exhibit behaviors analogous to the Case 1 and Case 2 discussed in Sec. \ref{sec:absence_4}. In other words, the generated $6$-local operators appear only once or partial of them appear twice. For example, the operators $ZXYZ$ can only generate two 6-local operators, given by
\begin{align}
\begin{matrix}
  &  &  & Z & X & Y &Z \\
  -& Z & X & X &  &  & \\\hline
 - & Z & X & Y & X & Y &Z,
\end{matrix}  \qquad
\begin{matrix}
  &  &  & Z & X & Y &Z \\
  -& Z & Y & Y &  &  & \\\hline
  & Z & Y & X & X & Y &Z,
\end{matrix}
\end{align}
both of which appear one time in the simplified tables. It is worth noting that for the original Fredkin spin chain, the charge $ZXYZ$ belongs to the Case 3. 

For the above truncated Fredkin spin chain, only two $4$-local charges $ZYXX$ and $ZXYY$ form pairs in generating $6$-local operators from all terms in the truncated $H_j^{(3)}$. Applying the similar analyze used in Eqs. (\ref{4-local case3 1}) and (\ref{4-local case3 2}), we can prove that the two charges have zero coefficients. Therefore we conclude that no $4$-local conserved charges exist in such truncated Fredkin spin chain. 

For general $k$-local conserved charges, we find the truncated model imposes additional constraints on the charge sequences compared to the original model. Therefore, the absence of general $k$-local conserved charges follows the similar methods to those established in Sec. \ref{sec:absence_3_4_5} and Sec.\ref{sec:absence_general}. With the above arguments, we confirm that the truncated Fredkin spin chain is also nonintegrable.

\end{proof}

\section{\label{sec:conclusion} Conclusions and outlooks}

In this work, we have proven that the spin-$1/2$ Fredkin spin chain, originally proposed in \cite{salberger2018fredkin}, lacks $k$-local conserved charges for $4 \leq k \leq N/2$. This result confirms that the model cannot be solved using the standard algebraic Bethe Ansatz. Importantly, this conclusion holds for both periodic and open boundary conditions. Motivated by the integrable structure of the free Motzkin spin chain \cite{Tong2020ShorMovassaghCL, Hao2022ExactSO}, we have further investigated the local conserved charges of the truncated Fredkin spin chain, which is obtained by removing one or more three-site terms from the Hamiltonian. In contrast to the Motzkin spin chain, the truncated Fredkin spin chain does not exhibit any integrable structure. This suggests that integrable models with three-site interactions are inherently more intricate than those with nearest-neighbor interactions \cite{Gombor2021IntegrableSC}.

Technically, our proof extends the method proposed in \cite{shiraishi2019proofXYZh} to models that possess multiple three-site interaction terms. For general $k$-local charges, the key to the proof lies in the derivation of valid sequences, as detailed in Sec. \ref{sec:derive_sequence}. This derivation allows us to focus on a finite subset of charges, significantly simplifying the analysis. Notably, we observe that local charges with even and odd $k$ exhibit distinct behaviors. Consequently, we have employed different strategies to demonstrate that the coefficients of these charges vanish in each case.

Our work confirms the absence of local conserved charges in the Fredkin spin chain, thereby ruling out standard Yang-Baxter integrability. However, whether it possesses nontrivial quasi-local conserved charges, which are expressible as sums of densities with exponentially decaying tails and essential for characterizing nonequilibrium dynamics \cite{prosen2011open,prosen2013families,pereira2014exactly,ilievski2015quasilocal}, remains an open question. This question also lies beyond the reach of Shiraishi's method. Note that the construction of quasi-local or nonlocal conserved charges are commonly based on the transfer matrix or existing local conserved charges \cite{pereira2014exactly,ilievski2015quasilocal,bargheer2009long}. Therefore, our results also suggest that the Fredkin spin chain does not possess nontrivial quasi-local conserved charges, although inconclusive.

There are several possible extensions of our work. First, the deformed Fredkin spin chains exhibit richer structures compared to the original Fredkin spin chain \cite{salberger2017deformed, udagawa2017finite, zhang2017entropy}. It would be of significant interest to analyze the integrable structures of the deformed models and their truncated versions. Notably, the deformed Fredkin spin chain contains more three-site interaction terms than the original model. Second, recent studies have extended the analysis of conserved charges to spin-1 models \cite{hokkyo2024proof, park2025proofspin1, Shiraishi2025DichotomyTS}. This opens the possibility of determining whether the original Motzkin spin chain and other truncated versions are integrable. Third, Hokkyo recently proposed a sufficient condition for nonintegrability that does not require the analysis of $k$-local conserved charges \cite{hokkyo2025rigorous,Hokkyo2025IntegrabilityFA}. By transforming the Fredkin spin chain into a nearest-neighbor model, where each site is in the four-dimensional space, it would be intriguing to test whether the Fredkin spin chain and its truncated versions satisfy Hokkyo's condition. We leave these questions open for future investigation.

\begin{acknowledgments}

This work was supported by the NSFC (Grants No.12305028, No.12275215, No.12275214, and No.12247103), and the Youth Innovation Team of Shaanxi Universities. KZ is supported by the China Postdoctoral Science Foundation under Grant Number 2025M773421, and Scientific Research Program Funded by Education Department of Shaanxi Provincial Government (Program No.24JP186). 

\end{acknowledgments}


%

\appendix
\section{\label{App_sec: all_6-local} All 6-local operators generated by the 4-local charges}

In Tables \ref{table:6-local 1}-\ref{table:6-local 9}, we list all $6$-local operators generated by the $4$-local charges and the $3$-local Hamiltonian terms. The column corresponds to the operator generated by the same Hamiltonian term. The row corresponds to the operator generage by the same charge. Operators of the same color in the first row and column show the location of one site overlap in the commutators. For example, we have
\begin{gather}
    \qquad\begin{matrix}
     {\color{green}X} & I & I & {\color{red}X} &  & \\
      &  & - & {\color{red}Z} & X & X \\\hline
     X & I & I & Y & X & X,
   \end{matrix}\qquad
   \begin{matrix}
      &  & {\color{green}Y} & X & X & {\color{red}Y} \\
     X & X & {\color{green}Z} &  & &\\\hline
     Z & X & X & X & X & Y.
   \end{matrix}
    \end{gather}
The {\color{blue}Blue} terms represent the $6$-local operators that appear twice in Tables \ref{table:6-local 1}-\ref{table:6-local 9}.

\begin{table*}[!ht]
\caption{All $6$-local operators generated by the commutators between $4$-local charges (the first column) and $3$-local Hamiltonian terms (the first row). The green and red operators indicate the positions of operator overlap in the commutator. The blue $6$-local operators appear twice in these tables. This table only shows $4$-local charges with the structure $X\cdot\cdot X$. Commutators of other charges are shown in Tables \ref{table:6-local 2}-\ref{table:6-local 9}.}
    \centering
    \begin{tabular}{c|cccccc}
    \hline
         ~ & $-{\color{red}Z}XX$ & $-ZY{\color{green}Y}$ & $-{\color{red}Z}YY$ & $XX{\color{green}Z}$ & $YY{\color{green}Z}$ & ${\color{red}Y}YZ$ \\ \hline
        ${\color{green}X}II{\color{red}X}$ & $XIIYXX$ & $-ZYZIIX$ & $XIIYYY$ & $-XXYIIX$ & $-YYYIIX$ & $XIIZYZ$ \\ 
        ${\color{green}X}IX{\color{red}X}$ & $XIXYXX$ & $-ZYZIXX$ & $XIXYYY$ & $-XXYIXX$ & $-YYYIXX$ & $XIXZYZ$ \\ 
        ${\color{green}X}IY{\color{red}X}$ & $XIYYXX$ & $-ZYZIYX$ & $XIYYYY$ & $-XXYIYX$ & $-YYYIYX$ & $XIYZYZ$ \\ 
        ${\color{green}X}IZ{\color{red}X}$ & $XIZYXX$ & $-ZYZIZX$ & $XIZYYY$ & $-XXYIZX$ & $-YYYIZX$ & $XIZZYZ$ \\\hline 
        ${\color{green}X}XI{\color{red}X}$ & $XXIYXX$ & $-ZYZXIX$ & $XXIYYY$ & $-XXYXIX$ & $-YYYXIX$ & $XXIZYZ$ \\ 
        ${\color{green}X}XX{\color{red}X}$ & ${\color{blue}XXXYXX}$ & ${\color{blue}-ZYZXXX}$ & ${\color{blue}XXXYYY}$ & ${\color{blue}-XXYXXX}$ & ${\color{blue}-YYYXXX}$ & ${\color{blue}XXXZYZ}$ \\ 
        ${\color{green}X}XY{\color{red}X}$ & ${\color{blue}XXYYXX}$ & $-ZYZXYX$ & ${\color{blue}XXYYYY}$ & $-XXYXYX$ & $-YYYXYX$ & ${\color{blue}XXYZYZ}$ \\ 
        ${\color{green}X}XZ{\color{red}X}$ & $XXZYXX$ & $-ZYZXZX$ & $XXZYYY$ & $-XXYXZX$ & $-YYYXZX$ & $XXZZYZ$ \\ \hline
        ${\color{green}X}YI{\color{red}X}$ & $XYIYXX$ & $-ZYZYIX$ & $XYIYYY$ & $-XXYYIX$ & $-YYYYIX$ & $XYIZYZ$ \\ 
        ${\color{green}X}YX{\color{red}X}$ & $XYXYXX$ & ${\color{blue}-ZYZYXX}$ & $XYXYYY$ & ${\color{blue}-XXYYXX}$ & ${\color{blue}-YYYYXX}$ & $XYXZYZ$ \\ 
        ${\color{green}X}YY{\color{red}X}$ & $XYYYXX$ & $-ZYZYYX$ & $XYYYYY$ & $-XXYYYX$ & $-YYYYYX$ & $XYYZYZ$ \\ 
        ${\color{green}X}YZ{\color{red}X}$ & $XYZYXX$ & $-ZYZYZX$ & $XYZYYY$ & $-XXYYZX$ & $-YYYYZX$ & $XYZZYZ$ \\ \hline
        ${\color{green}X}ZI{\color{red}X}$ & $XZIYXX$ & $-ZYZZIX$ & $XZIYYY$ & $-XXYZIX$ & $-YYYZIX$ & $XZIZYZ$ \\ 
        ${\color{green}X}ZX{\color{red}X}$ & $XZXYXX$ & $-ZYZZXX$ & $XZXYYY$ & $-XXYZXX$ & $-YYYZXX$ & $XZXZYZ$ \\ 
        ${\color{green}X}ZY{\color{red}X}$ & $XZYYXX$ & $-ZYZZYX$ & $XZYYYY$ & $-XXYZYX$ & $-YYYZYX$ & $XZYZYZ$ \\ 
        ${\color{green}X}ZZ{\color{red}X}$ & $XZZYXX$ & $-ZYZZZX$ & $XZZYYY$ & $-XXYZZX$ & $-YYYZZX$ & $XZZZYZ$ \\ \hline
    \end{tabular}
    \label{table:6-local 1}
\end{table*}

\begin{table*}[!ht]
\caption{All $6$-local operators generated by the commutators between the $4$-local charges with the structure $X\cdot\cdot Y$ and the $3$-local Hamiltonian terms.}
    \centering
    \begin{tabular}{c|cccccc}
    \hline
        ~ & $-{\color{red}Z}XX$ & $-ZY{\color{green}Y}$ & $-{\color{red}Z}YY$ & $XX{\color{green}Z}$ & ${\color{red}X}XZ$ & $YY{\color{green}Z}$ \\ \hline
        ${\color{green}X}II{\color{red}Y}$ & $-XIIXXX$ & $-ZYZIIY$ & $-XIIXYY$ & $-XXYIIY$ & $-XIIZXZ$ & $-YYYIIY$ \\ 
        ${\color{green}X}IX{\color{red}Y}$ & $-XIXXXX$ & $-ZYZIXY$ & $-XIXXYY$ & $-XXYIXY$ & $-XIXZXZ$ & $-YYYIXY$ \\ 
        ${\color{green}X}IY{\color{red}Y}$ & $-XIYXXX$ & $-ZYZIYY$ & $-XIYXYY$ & $-XXYIYY$ & $-XIYZXZ$ & $-YYYIYY$ \\ 
        ${\color{green}X}IZ{\color{red}Y}$ & $-XIZXXX$ & $-ZYZIZY$ & $-XIZXYY$ & $-XXYIZY$ & $-XIZZXZ$ & $-YYYIZY$ \\ \hline
        ${\color{green}X}XI{\color{red}Y}$ & $-XXIXXX$ & $-ZYZXIY$ & $-XXIXYY$ & $-XXYXIY$ & $-XXIZXZ$ & $-YYYXIY$ \\ 
        ${\color{green}X}XX{\color{red}Y}$ & ${\color{blue}-XXXXXX}$ & $-ZYZXXY$ & ${\color{blue}-XXXXYY}$ & $-XXYXXY$ & ${\color{blue}-XXXZXZ}$ & $-YYYXXY$ \\ 
        ${\color{green}X}XY{\color{red}Y}$ & ${\color{blue}-XXYXXX}$ & ${\color{blue}-ZYZXYY}$ & ${\color{blue}-XXYXYY}$ & ${\color{blue}-XXYXYY}$ & ${\color{blue}-XXYZXZ}$ & ${\color{blue}-YYYXYY}$ \\ 
        ${\color{green}X}XZ{\color{red}Y}$ & $-XXZXXX$ & $-ZYZXZY$ & $-XXZXYY$ & $-XXYXZY$ & $-XXZZXZ$ & $-YYYXZY$ \\ \hline
        ${\color{green}X}YI{\color{red}Y}$ & $-XYIXXX$ & $-ZYZYIY$ & $-XYIXYY$ & $-XXYYIY$ & $-XYIZXZ$ & $-YYYYIY$ \\ 
        ${\color{green}X}YX{\color{red}Y}$ & $-XYXXXX$ & $-ZYZYXY$ & $-XYXXYY$ & $-XXYYXY$ & $-XYXZXZ$ & $-YYYYXY$ \\ 
        ${\color{green}X}YY{\color{red}Y}$ & $-XYYXXX$ & ${\color{blue}-ZYZYYY}$ & $-XYYXYY$ & ${\color{blue}-XXYYYY}$ & $-XYYZXZ$ & ${\color{blue}-YYYYYY}$ \\ 
        ${\color{green}X}YZ{\color{red}Y}$ & $-XYZXXX$ & $-ZYZYZY$ & $-XYZXYY$ & $-XXYYZY$ & $-XYZZXZ$ & $-YYYYZY$ \\ \hline
        ${\color{green}X}ZI{\color{red}Y}$ & $-XZIXXX$ & $-ZYZZIY$ & $-XZIXYY$ & $-XXYZIY$ & $-XZIZXZ$ & $-YYYZIY$ \\ 
        ${\color{green}X}ZX{\color{red}Y}$ & $-XZXXXX$ & $-ZYZZXY$ & $-XZXXYY$ & $-XXYZXY$ & $-XZXZXZ$ & $-YYYZXY$ \\ 
        ${\color{green}X}ZY{\color{red}Y}$ & $-XZYXXX$ & $-ZYZZYY$ & $-XZYXYY$ & $-XXYZYY$ & $-XZYZXZ$ & $-YYYZYY$ \\ 
        ${\color{green}X}ZZ{\color{red}Y}$ & $-XZZXXX$ & $-ZYZZZY$ & $-XZZXYY$ & $-XXYZZY$ & $-XZZZXZ$ & $-YYYZZY$ \\ \hline
    \end{tabular}
    \label{table:6-local 2}
\end{table*}

\begin{table*}[!ht]
\caption{All $6$-local operators generated by the commutators between the $4$-local charges with the structure $X\cdot\cdot Z$ and the $3$-local Hamiltonian terms.}
    \centering
    \begin{tabular}{c|ccccc}
    \hline
        ~ & $-ZY{\color{green}Y}$ & $XX{\color{green}Z}$ & ${\color{red}X}XZ$ & $YY{\color{green}Z}$ & ${\color{red}Y}YZ$ \\ \hline
        ${\color{green}X}II{\color{red}Z}$ & $-ZYZIIZ$ & $-XXYIIZ$ & $XIIYXZ$ & $-YYYIIZ$ & $-XIIXYZ$ \\ 
        ${\color{green}X}IX{\color{red}Z}$ & $-ZYZIXZ$ & $-XXYIXZ$ & $XIXYXZ$ & $-YYYIXZ$ & $-XIXXYZ$ \\ 
        ${\color{green}X}IY{\color{red}Z}$ & $-ZYZIYZ$ & $-XXYIYZ$ & $XIYYXZ$ & $-YYYIYZ$ & $-XIYXYZ$ \\ 
        ${\color{green}X}IZ{\color{red}Z}$ & $-ZYZIZZ$ & $-XXYIZZ$ & $XIZYXZ$ & $-YYYIZZ$ & $-XIZXYZ$ \\ \hline
        ${\color{green}X}XI{\color{red}Z}$ & $-ZYZXIZ$ & $-XXYXIZ$ & $XXIYXZ$ & $-YYYXIZ$ & $-XXIXYZ$ \\ 
        ${\color{green}X}XX{\color{red}Z}$ & $-ZYZXXZ$ & $-XXYXXZ$ & ${\color{blue}XXXYXZ}$ & $-YYYXXZ$ & ${\color{blue}-XXXXYZ}$ \\ 
        ${\color{green}X}XY{\color{red}Z}$ & ${\color{blue}-ZYZXYZ}$ & ${\color{blue}-XXYXYZ}$ & ${\color{blue}XXYYXZ}$ & ${\color{blue}-YYYXYZ}$ & ${\color{blue}-XXYXYZ}$ \\ 
        ${\color{green}X}XZ{\color{red}Z}$ & $-ZYZXZZ$ & $-XXYXZZ$ & $XXZYXZ$ & $-YYYXZZ$ & $-XXZXYZ$ \\ \hline
        ${\color{green}X}YI{\color{red}Z}$ & $-ZYZYIZ$ & $-XXYYIZ$ & $XYIYXZ$ & $-YYYYIZ$ & $-XYIXYZ$ \\ 
        ${\color{green}X}YX{\color{red}Z}$ & ${\color{blue}-ZYZYXZ}$ & ${\color{blue}-XXYYXZ}$ & $XYXYXZ$ & ${\color{blue}-YYYYXZ}$ & $-XYXXYZ$ \\ 
        ${\color{green}X}YY{\color{red}Z}$ & $-ZYZYYZ$ & $-XXYYYZ$ & $XYYYXZ$ & $-YYYYYZ$ & $-XYYXYZ$ \\ 
        ${\color{green}X}YZ{\color{red}Z}$ & $-ZYZYZZ$ & $-XXYYZZ$ & $XYZYXZ$ & $-YYYYZZ$ & $-XYZXYZ$ \\\hline 
        ${\color{green}X}ZI{\color{red}Z}$ & $-ZYZZIZ$ & $-XXYZIZ$ & $XZIYXZ$ & $-YYYZIZ$ & $-XZIXYZ$ \\ 
        ${\color{green}X}ZX{\color{red}Z}$ & ${\color{blue}-ZYZZXZ}$ & ${\color{blue}-XXYZXZ}$ & $XZXYXZ$ & ${\color{blue}-YYYZXZ}$ & $-XZXXYZ$ \\ 
        ${\color{green}X}ZY{\color{red}Z}$ & ${\color{blue}-ZYZZYZ}$ & ${\color{blue}-XXYZYZ}$ & $XZYYXZ$ & ${\color{blue}-YYYZYZ}$ & $-XZYXYZ$ \\ 
        ${\color{green}X}ZZ{\color{red}Z}$ & $-ZYZZZZ$ & $-XXYZZZ$ & $XZZYXZ$ & $-YYYZZZ$ & $-XZZXYZ$ \\ \hline
    \end{tabular}
\end{table*}

\begin{table*}[!ht]
\caption{All $6$-local operators generated by the commutators between the $4$-local charges with the structure $Y\cdot\cdot X$ and the $3$-local Hamiltonian terms.}
    \centering
    \begin{tabular}{c|cccccc}
    \hline
        ~ & $-ZX{\color{green}X}$ & $-{\color{red}Z}XX$ & $-{\color{red}Z}YY$ & $XX{\color{green}Z}$ & $YY{\color{green}Z}$ & ${\color{red}Y}YZ$ \\ \hline
        ${\color{green}Y}II{\color{red}X}$ & $ZXZIIX$ & $YIIYXX$ & $YIIYYY$ & $XXXIIX$ & $YYXIIX$ & $YIIZYZ$ \\ 
        ${\color{green}Y}IX{\color{red}X}$ & $ZXZIXX$ & $YIXYXX$ & $YIXYYY$ & $XXXIXX$ & $YYXIXX$ & $YIXZYZ$ \\ 
        ${\color{green}Y}IY{\color{red}X}$ & $ZXZIYX$ & $YIYYXX$ & $YIYYYY$ & $XXXIYX$ & $YYXIYX$ & $YIYZYZ$ \\ 
        ${\color{green}Y}IZ{\color{red}X}$ & $ZXZIZX$ & $YIZYXX$ & $YIZYYY$ & $XXXIZX$ & $YYXIZX$ & $YIZZYZ$ \\ \hline
        ${\color{green}Y}XI{\color{red}X}$ & $ZXZXIX$ & $YXIYXX$ & $YXIYYY$ & $XXXXIX$ & $YYXXIX$ & $YXIZYZ$ \\ 
        ${\color{green}Y}XX{\color{red}X}$ & ${\color{blue}ZXZXXX}$ & $YXXYXX$ & $YXXYYY$ & ${\color{blue}XXXXXX}$ & ${\color{blue}YYXXXX}$ & $YXXZYZ$ \\ 
        ${\color{green}Y}XY{\color{red}X}$ & $ZXZXYX$ & $YXYYXX$ & $YXYYYY$ & $XXXXYX$ & $YYXXYX$ & $YXYZYZ$ \\ 
        ${\color{green}Y}XZ{\color{red}X}$ & $ZXZXZX$ & $YXZYXX$ & $YXZYYY$ & $XXXXZX$ & $YYXXZX$ & $YXZZYZ$ \\\hline 
        ${\color{green}Y}YI{\color{red}X}$ & $ZXZYIX$ & $YYIYXX$ & $YYIYYY$ & $XXXYIX$ & $YYXYIX$ & $YYIZYZ$ \\ 
        ${\color{green}Y}YX{\color{red}X}$ & ${\color{blue}ZXZYXX}$ & ${\color{blue}YYXYXX}$ & ${\color{blue}YYXYYY}$ & ${\color{blue}XXXYXX}$ & ${\color{blue}YYXYXX}$ & ${\color{blue}YYXZYZ}$ \\ 
        ${\color{green}Y}YY{\color{red}X}$ & $ZXZYYX$ & ${\color{blue}YYYYXX}$ & ${\color{blue}YYYYYY}$ & $XXXYYX$ & $YYXYYX$ & ${\color{blue}YYYZYZ}$ \\ 
        ${\color{green}Y}YZ{\color{red}X}$ & $ZXZYZX$ & $YYZYXX$ & $YYZYYY$ & $XXXYZX$ & $YYXYZX$ & $YYZZYZ$ \\ \hline
        ${\color{green}Y}ZI{\color{red}X}$ & $ZXZZIX$ & $YZIYXX$ & $YZIYYY$ & $XXXZIX$ & $YYXZIX$ & $YZIZYZ$ \\ 
        ${\color{green}Y}ZX{\color{red}X}$ & $ZXZZXX$ & $YZXYXX$ & $YZXYYY$ & $XXXZXX$ & $YYXZXX$ & $YZXZYZ$ \\ 
        ${\color{green}Y}ZY{\color{red}X}$ & $ZXZZYX$ & $YZYYXX$ & $YZYYYY$ & $XXXZYX$ & $YYXZYX$ & $YZYZYZ$ \\ 
        ${\color{green}Y}ZZ{\color{red}X}$ & $ZXZZZX$ & $YZZYXX$ & $YZZYYY$ & $XXXZZX$ & $YYXZZX$ & $YZZZYZ$ \\ \hline
    \end{tabular}
\end{table*}

\begin{table*}[!ht]
\caption{All $6$-local operators generated by the commutators between the $4$-local charges with the structure $Y\cdot\cdot Y$ and the $3$-local Hamiltonian terms.}
    \centering
    \begin{tabular}{c|cccccc}
    \hline
        ~ & $-ZX{\color{green}X}$ & $-{\color{red}Z}XX$ & $-{\color{red}Z}YY$ & $XX{\color{green}Z}$ & ${\color{red}X}XZ$ & $YY{\color{green}Z}$ \\ \hline
        ${\color{green}Y}II{\color{red}Y}$ & $ZXZIIY$ & $-YIIXXX$ & $-YIIXYY$ & $XXXIIY$ & $-YIIZXZ$ & $YYXIIY$ \\ 
        ${\color{green}Y}IX{\color{red}Y}$ & $ZXZIXY$ & $-YIXXXX$ & $-YIXXYY$ & $XXXIXY$ & $-YIXZXZ$ & $YYXIXY$ \\ 
        ${\color{green}Y}IY{\color{red}Y}$ & $ZXZIYY$ & $-YIYXXX$ & $-YIYXYY$ & $XXXIYY$ & $-YIYZXZ$ & $YYXIYY$ \\ 
        ${\color{green}Y}IZ{\color{red}Y}$ & $ZXZIZY$ & $-YIZXXX$ & $-YIZXYY$ & $XXXIZY$ & $-YIZZXZ$ & $YYXIZY$ \\ \hline
        ${\color{green}Y}XI{\color{red}Y}$ & $ZXZXIY$ & $-YXIXXX$ & $-YXIXYY$ & $XXXXIY$ & $-YXIZXZ$ & $YYXXIY$ \\ 
        ${\color{green}Y}XX{\color{red}Y}$ & $ZXZXXY$ & $-YXXXXX$ & $-YXXXYY$ & $XXXXXY$ & $-YXXZXZ$ & $YYXXXY$ \\ 
        ${\color{green}Y}XY{\color{red}Y}$ & ${\color{blue}ZXZXYY}$ & $-YXYXXX$ & $-YXYXYY$ & ${\color{blue}XXXXYY}$ & $-YXYZXZ$ & ${\color{blue}YYXXYY}$ \\ 
        ${\color{green}Y}XZ{\color{red}Y}$ & $ZXZXZY$ & $-YXZXXX$ & $-YXZXYY$ & $XXXXZY$ & $-YXZZXZ$ & $YYXXZY$ \\\hline 
        ${\color{green}Y}YI{\color{red}Y}$ & $ZXZYIY$ & $-YYIXXX$ & $-YYIXYY$ & $XXXYIY$ & $-YYIZXZ$ & $YYXYIY$ \\ 
        ${\color{green}Y}YX{\color{red}Y}$ & $ZXZYXY$ & ${\color{blue}-YYXXXX}$ & ${\color{blue}-YYXXYY}$ & $XXXYXY$ & ${\color{blue}-YYXZXZ}$ & $YYXYXY$ \\ 
        ${\color{green}Y}YY{\color{red}Y}$ & ${\color{blue}ZXZYYY}$ & ${\color{blue}-YYYXXX}$ & ${\color{blue}-YYYXYY}$ & ${\color{blue}XXXYYY}$ & ${\color{blue}-YYYZXZ}$ & ${\color{blue}YYXYYY}$ \\ 
        ${\color{green}Y}YZ{\color{red}Y}$ & $ZXZYZY$ & $-YYZXXX$ & $-YYZXYY$ & $XXXYZY$ & $-YYZZXZ$ & $YYXYZY$ \\ \hline
        ${\color{green}Y}ZI{\color{red}Y}$ & $ZXZZIY$ & $-YZIXXX$ & $-YZIXYY$ & $XXXZIY$ & $-YZIZXZ$ & $YYXZIY$ \\ 
        ${\color{green}Y}ZX{\color{red}Y}$ & $ZXZZXY$ & $-YZXXXX$ & $-YZXXYY$ & $XXXZXY$ & $-YZXZXZ$ & $YYXZXY$ \\ 
        ${\color{green}Y}ZY{\color{red}Y}$ & $ZXZZYY$ & $-YZYXXX$ & $-YZYXYY$ & $XXXZYY$ & $-YZYZXZ$ & $YYXZYY$ \\ 
        ${\color{green}Y}ZZ{\color{red}Y}$ & $ZXZZZY$ & $-YZZXXX$ & $-YZZXYY$ & $XXXZZY$ & $-YZZZXZ$ & $YYXZZY$ \\ \hline
    \end{tabular}
\end{table*}

\begin{table*}[!ht]
\caption{All $6$-local operators generated by the commutators between the $4$-local charges with the structure $Y\cdot\cdot Z$ and the $3$-local Hamiltonian terms.}
    \centering
    \begin{tabular}{c|ccccc}
    \hline
        ~ & $-ZX{\color{green}X}$ & $XX{\color{green}Z}$ & ${\color{red}X}XZ$ & $YY{\color{green}Z}$ & ${\color{red}Y}YZ$ \\ \hline
        ${\color{green}Y}II{\color{red}Z}$ & $ZXZIIZ$ & $XXXIIZ$ & $-YIIYXZ$ & $YYXIIZ$ & $-YIIXYZ$ \\ 
        ${\color{green}Y}IX{\color{red}Z}$ & $ZXZIXZ$ & $XXXIXZ$ & $-YIXYXZ$ & $YYXIXZ$ & $-YIXXYZ$ \\ 
        ${\color{green}Y}IY{\color{red}Z}$ & $ZXZIYZ$ & $XXXIYZ$ & $-YIYYXZ$ & $YYXIYZ$ & $-YIYXYZ$ \\ 
        ${\color{green}Y}IZ{\color{red}Z}$ & $ZXZIZZ$ & $XXXIZZ$ & $-YIZYXZ$ & $YYXIZZ$ & $-YIZXYZ$ \\ \hline
        ${\color{green}Y}XI{\color{red}Z}$ & $ZXZXIZ$ & $XXXXIZ$ & $-YXIYXZ$ & $YYXXIZ$ & $-YXIXYZ$ \\ 
        ${\color{green}Y}XX{\color{red}Z}$ & $ZXZXXZ$ & $XXXXXZ$ & $-YXXYXZ$ & $YYXXXZ$ & $-YXXXYZ$ \\ 
        ${\color{green}Y}XY{\color{red}Z}$ & ${\color{blue}ZXZXYZ}$ & ${\color{blue}XXXXYZ}$ & $-YXYYXZ$ & ${\color{blue}YYXXYZ}$ & $-YXYXYZ$ \\ 
        ${\color{green}Y}XZ{\color{red}Z}$ & $ZXZXZZ$ & $XXXXZZ$ & $-YXZYXZ$ & $YYXXZZ$ & $-YXZXYZ$ \\ \hline
        ${\color{green}Y}YI{\color{red}Z}$ & $ZXZYIZ$ & $XXXYIZ$ & $-YYIYXZ$ & $YYXYIZ$ & $-YYIXYZ$ \\ 
        ${\color{green}Y}YX{\color{red}Z}$ & ${\color{blue}ZXZYXZ}$ & ${\color{blue}XXXYXZ}$ & ${\color{blue}-YYXYXZ}$ & ${\color{blue}YYXYXZ}$ & ${\color{blue}-YYXXYZ}$ \\ 
        ${\color{green}Y}YY{\color{red}Z}$ & $ZXZYYZ$ & $XXXYYZ$ & ${\color{blue}-YYYYXZ}$ & $YYXYYZ$ & ${\color{blue}-YYYXYZ}$ \\ 
        ${\color{green}Y}YZ{\color{red}Z}$ & $ZXZYZZ$ & $XXXYZZ$ & $-YYZYXZ$ & $YYXYZZ$ & $-YYZXYZ$ \\ \hline
        ${\color{green}Y}ZI{\color{red}Z}$ & $ZXZZIZ$ & $XXXZIZ$ & $-YZIYXZ$ & $YYXZIZ$ & $-YZIXYZ$ \\ 
        ${\color{green}Y}ZX{\color{red}Z}$ & ${\color{blue}ZXZZXZ}$ & ${\color{blue}XXXZXZ}$ & $-YZXYXZ$ & ${\color{blue}YYXZXZ}$ & $-YZXXYZ$ \\ 
        ${\color{green}Y}ZY{\color{red}Z}$ & ${\color{blue}ZXZZYZ}$ & ${\color{blue}XXXZYZ}$ & $-YZYYXZ$ & ${\color{blue}YYXZYZ}$ & $-YZYXYZ$ \\ 
        ${\color{green}Y}ZZ{\color{red}Z}$ & $ZXZZZZ$ & $XXXZZZ$ & $-YZZYXZ$ & $YYXZZZ$ & $-YZZXYZ$ \\ \hline
    \end{tabular}
\end{table*}

\begin{table*}[!ht]
\caption{All $6$-local operators generated by the commutators between the $4$-local charges with the structure $Z\cdot\cdot X$ and the $3$-local Hamiltonian terms.}
    \centering
    \begin{tabular}{c|ccccc}
    \hline
        ~ & $-ZX{\color{green}X}$ & $-{\color{red}Z}XX$ & $-ZY{\color{green}Y}$ & $-{\color{red}Z}YY$ & ${\color{red}Y}YZ$ \\ \hline
        ${\color{green}Z}II{\color{red}X}$ & $-ZXYIIX$ & $ZIIYXX$ & $ZYXIIX$ & $ZIIYYY$ & $ZIIZYZ$ \\ 
        ${\color{green}Z}IX{\color{red}X}$ & $-ZXYIXX$ & $ZIXYXX$ & $ZYXIXX$ & $ZIXYYY$ & $ZIXZYZ$ \\ 
        ${\color{green}Z}IY{\color{red}X}$ & $-ZXYIYX$ & $ZIYYXX$ & $ZYXIYX$ & $ZIYYYY$ & $ZIYZYZ$ \\ 
        ${\color{green}Z}IZ{\color{red}X}$ & $-ZXYIZX$ & $ZIZYXX$ & $ZYXIZX$ & $ZIZYYY$ & $ZIZZYZ$ \\ \hline
        ${\color{green}Z}XI{\color{red}X}$ & $-ZXYXIX$ & $ZXIYXX$ & $ZYXXIX$ & $ZXIYYY$ & $ZXIZYZ$ \\ 
        ${\color{green}Z}XX{\color{red}X}$ & ${\color{blue}-ZXYXXX}$ & $ZXXYXX$ & ${\color{blue}ZYXXXX}$ & $ZXXYYY$ & $ZXXZYZ$ \\ 
        ${\color{green}Z}XY{\color{red}X}$ & $-ZXYXYX$ & ${\color{blue}ZXYYXX}$ & $ZYXXYX$ & ${\color{blue}ZXYYYY}$ & ${\color{blue}ZXYZYZ}$ \\ 
        ${\color{green}Z}XZ{\color{red}X}$ & $-ZXYXZX$ & ${\color{blue}ZXZYXX}$ & $ZYXXZX$ & ${\color{blue}ZXZYYY}$ & ${\color{blue}ZXZZYZ}$ \\ \hline
        ${\color{green}Z}YI{\color{red}X}$ & $-ZXYYIX$ & $ZYIYXX$ & $ZYXYIX$ & $ZYIYYY$ & $ZYIZYZ$ \\ 
        ${\color{green}Z}YX{\color{red}X}$ & ${\color{blue}-ZXYYXX}$ & ${\color{blue}ZYXYXX}$ & ${\color{blue}ZYXYXX}$ & ${\color{blue}ZYXYYY}$ & ${\color{blue}ZYXZYZ}$ \\ 
        ${\color{green}Z}YY{\color{red}X}$ & $-ZXYYYX$ & $ZYYYXX$ & $ZYXYYX$ & $ZYYYYY$ & $ZYYZYZ$ \\ 
        ${\color{green}Z}YZ{\color{red}X}$ & $-ZXYYZX$ & ${\color{blue}ZYZYXX}$ & $ZYXYZX$ & ${\color{blue}ZYZYYY}$ & ${\color{blue}ZYZZYZ}$ \\ \hline
        ${\color{green}Z}ZI{\color{red}X}$ & $-ZXYZIX$ & $ZZIYXX$ & $ZYXZIX$ & $ZZIYYY$ & $ZZIZYZ$ \\ 
        ${\color{green}Z}ZX{\color{red}X}$ & $-ZXYZXX$ & $ZZXYXX$ & $ZYXZXX$ & $ZZXYYY$ & $ZZXZYZ$ \\ 
        ${\color{green}Z}ZY{\color{red}X}$ & $-ZXYZYX$ & $ZZYYXX$ & $ZYXZYX$ & $ZZYYYY$ & $ZZYZYZ$ \\ 
        ${\color{green}Z}ZZ{\color{red}X}$ & $-ZXYZZX$ & $ZZZYXX$ & $ZYXZZX$ & $ZZZYYY$ & $ZZZZYZ$ \\ \hline
    \end{tabular}
\end{table*}

\begin{table*}[!ht]
\caption{All $6$-local operators generated by the commutators between the $4$-local charges with the structure $Z\cdot\cdot Y$ and the $3$-local Hamiltonian terms.}
    \centering
    \begin{tabular}{c|ccccc}
    \hline
        ~ & $-ZX{\color{green}X}$ & $-{\color{red}Z}XX$ & $-ZY{\color{green}Y}$ & $-{\color{red}Z}YY$ & ${\color{red}X}XZ$ \\ \hline
        ${\color{green}Z}II{\color{red}Y}$ & $-ZXYIIY$ & $-ZIIXXX$ & $ZYXIIY$ & $-ZIIXYY$ & $-ZIIZXZ$ \\ 
        ${\color{green}Z}IX{\color{red}Y}$ & $-ZXYIXY$ & $-ZIXXXX$ & $ZYXIXY$ & $-ZIXXYY$ & $-ZIXZXZ$ \\ 
        ${\color{green}Z}IY{\color{red}Y}$ & $-ZXYIYY$ & $-ZIYXXX$ & $ZYXIYY$ & $-ZIYXYY$ & $-ZIYZXZ$ \\ 
        ${\color{green}Z}IZ{\color{red}Y}$ & $-ZXYIZY$ & $-ZIZXXX$ & $ZYXIZY$ & $-ZIZXYY$ & $-ZIZZXZ$ \\ \hline
        ${\color{green}Z}XI{\color{red}Y}$ & $-ZXYXIY$ & $-ZXIXXX$ & $ZYXXIY$ & $-ZXIXYY$ & $-ZXIZXZ$ \\ 
        ${\color{green}Z}XX{\color{red}Y}$ & $-ZXYXXY$ & $-ZXXXXX$ & $ZYXXXY$ & $-ZXXXYY$ & $-ZXXZXZ$ \\ 
        ${\color{green}Z}XY{\color{red}Y}$ & ${\color{blue}-ZXYXYY}$ & ${\color{blue}-ZXYXXX}$ & ${\color{blue}ZYXXYY}$ & ${\color{blue}-ZXYXYY}$ & ${\color{blue}-ZXYZXZ}$ \\ 
        ${\color{green}Z}XZ{\color{red}Y}$ & $-ZXYXZY$ & ${\color{blue}-ZXZXXX}$ & $ZYXXZY$ & ${\color{blue}-ZXZXYY}$ & ${\color{blue}-ZXZZXZ}$ \\\hline 
        ${\color{green}Z}YI{\color{red}Y}$ & $-ZXYYIY$ & $-ZYIXXX$ & $ZYXYIY$ & $-ZYIXYY$ & $-ZYIZXZ$ \\ 
        ${\color{green}Z}YX{\color{red}Y}$ & $-ZXYYXY$ & ${\color{blue}-ZYXXXX}$ & $ZYXYXY$ & ${\color{blue}-ZYXXYY}$ & ${\color{blue}-ZYXZXZ}$ \\ 
        ${\color{green}Z}YY{\color{red}Y}$ & ${\color{blue}-ZXYYYY}$ & $-ZYYXXX$ & ${\color{blue}ZYXYYY}$ & $-ZYYXYY$ & $-ZYYZXZ$ \\ 
        ${\color{green}Z}YZ{\color{red}Y}$ & $-ZXYYZY$ & ${\color{blue}-ZYZXXX}$ & $ZYXYZY$ & ${\color{blue}-ZYZXYY}$ & ${\color{blue}-ZYZZXZ}$ \\\hline 
        ${\color{green}Z}ZI{\color{red}Y}$ & $-ZXYZIY$ & $-ZZIXXX$ & $ZYXZIY$ & $-ZZIXYY$ & $-ZZIZXZ$ \\ 
        ${\color{green}Z}ZX{\color{red}Y}$ & $-ZXYZXY$ & $-ZZXXXX$ & $ZYXZXY$ & $-ZZXXYY$ & $-ZZXZXZ$ \\ 
        ${\color{green}Z}ZY{\color{red}Y}$ & $-ZXYZYY$ & $-ZZYXXX$ & $ZYXZYY$ & $-ZZYXYY$ & $-ZZYZXZ$ \\ 
        ${\color{green}Z}ZZ{\color{red}Y}$ & $-ZXYZZY$ & $-ZZZXXX$ & $ZYXZZY$ & $-ZZZXYY$ & $-ZZZZXZ$ \\ \hline
    \end{tabular}
\end{table*}

\begin{table*}[!ht]
\caption{All $6$-local operators generated by the commutators between the $4$-local charges with the structure $Z\cdot\cdot Z$ and the $3$-local Hamiltonian terms.}
    \centering
    \begin{tabular}{c|cccc}
    \hline
        ~ & $-ZX{\color{green}X}$ & $-ZY{\color{green}Y}$ & ${\color{red}X}XZ$ & ${\color{red}Y}YZ$ \\ \hline
        ${\color{green}Z}II{\color{red}Z}$ & $-ZXYIIZ$ & $ZYXIIZ$ & $ZIIYXZ$ & $-ZIIXYZ$ \\ 
        ${\color{green}Z}IX{\color{red}Z}$ & $-ZXYIXZ$ & $ZYXIXZ$ & $ZIXYXZ$ & $-ZIXXYZ$ \\ 
        ${\color{green}Z}IY{\color{red}Z}$ & $-ZXYIYZ$ & $ZYXIYZ$ & $ZIYYXZ$ & $-ZIYXYZ$ \\ 
        ${\color{green}Z}IZ{\color{red}Z}$ & $-ZXYIZZ$ & $ZYXIZZ$ & $ZIZYXZ$ & $-ZIZXYZ$ \\ \hline
        ${\color{green}Z}XI{\color{red}Z}$ & $-ZXYXIZ$ & $ZYXXIZ$ & $ZXIYXZ$ & $-ZXIXYZ$ \\ 
        ${\color{green}Z}XX{\color{red}Z}$ & $-ZXYXXZ$ & $ZYXXXZ$ & $ZXXYXZ$ & $-ZXXXYZ$ \\ 
        ${\color{green}Z}XY{\color{red}Z}$ & ${\color{blue}-ZXYXYZ}$ & ${\color{blue}ZYXXYZ}$ & ${\color{blue}ZXYYXZ}$ & ${\color{blue}-ZXYXYZ}$ \\ 
        ${\color{green}Z}XZ{\color{red}Z}$ & $-ZXYXZZ$ & $ZYXXZZ$ & ${\color{blue}ZXZYXZ}$ & ${\color{blue}-ZXZXYZ}$ \\\hline 
        ${\color{green}Z}YI{\color{red}Z}$ & $-ZXYYIZ$ & $ZYXYIZ$ & $ZYIYXZ$ & $-ZYIXYZ$ \\ 
        ${\color{green}Z}YX{\color{red}Z}$ & ${\color{blue}-ZXYYXZ}$ & ${\color{blue}ZYXYXZ}$ & ${\color{blue}ZYXYXZ}$ & ${\color{blue}-ZYXXYZ}$ \\ 
        ${\color{green}Z}YY{\color{red}Z}$ & $-ZXYYYZ$ & $ZYXYYZ$ & $ZYYYXZ$ & $-ZYYXYZ$ \\ 
        ${\color{green}Z}YZ{\color{red}Z}$ & $-ZXYYZZ$ & $ZYXYZZ$ & ${\color{blue}ZYZYXZ}$ & ${\color{blue}-ZYZXYZ}$ \\ \hline
        ${\color{green}Z}ZI{\color{red}Z}$ & $-ZXYZIZ$ & $ZYXZIZ$ & $ZZIYXZ$ & $-ZZIXYZ$ \\ 
        ${\color{green}Z}ZX{\color{red}Z}$ & ${\color{blue}-ZXYZXZ}$ & ${\color{blue}ZYXZXZ}$ & $ZZXYXZ$ & $-ZZXXYZ$ \\ 
        ${\color{green}Z}ZY{\color{red}Z}$ & ${\color{blue}-ZXYZYZ}$ & ${\color{blue}ZYXZYZ}$ & $ZZYYXZ$ & $-ZZYXYZ$ \\ 
        ${\color{green}Z}ZZ{\color{red}Z}$ & $-ZXYZZZ$ & $ZYXZZZ$ & $ZZZYXZ$ & $-ZZZXYZ$ \\ \hline
    \end{tabular}
    \label{table:6-local 9}
\end{table*}

\section{\label{App_sec:nonzero_sequence} Derivation of valid boundary sequences}

In Sec. \ref{sec:derive_sequence}, we present five representative types of charges with valid boundary sequences (to form pair with other charges). There exist other valid boundary sequences. Our analysis begins with sequences introduced in Sec. \ref{sec:absence_5}. The generalization of the first four sequences in Eq. (\ref{3operator sequence left}) are shown in Eqs. (\ref{nonzero sequence 1}) and (\ref{nonzero sequence 2}) in the main text.  We next address the remaining four sequences.

In terms of the fifth sequence in Eq. (\ref{3operator sequence left}), we consider the charge
\begin{align}
    Z_jX_{j+1}Y_{j+2}A_{j+3}A_{j+4}A_{j+5}A_{j+6}\cdots Z_{j+l-1},
\end{align}
which forms pair with
\begin{align}
    Z_{j+2}A_{j+3}A_{j+4}A_{j+5}A_{j+6}\cdots Z_{j+l-1}.
\end{align}
According to Eq. (\ref{3operator sequence left}), $A_{j+3}A_{j+4}$ must satisfy $A_{j+3}A_{j+4}\in \left\{XY, YX, XZ, YZ\right\}$. As $A_{j+3}A_{j+4}$ takes $XY$ or $YX$, the corresponding charge forms pair with 
\begin{align}
    Z_{j+4}A_{j+5}A_{j+6}\cdots Z_{j+l-1},
\end{align}
which would generate other different operators. The above charge can be handled in similar way. In other words, That is to let $A_{j+5}A_{j+6}$ also take $XY$ or $YX$. Iterating the procedure with $A_{j+m}A_{j+m+1}\in \left\{XY,YX\right\}$ for odd $m\ge 1$, we obtain the charges in Eq. (\ref{nonzero sequenceZ}).

Next, we analyze the sixth boundary sequence in Eq. (\ref{3operator sequence left}). Consider the charge
\begin{align}
\label{app:ZXZ}
    Z_jX_{j+1}Z_{j+2}A_{j+3}A_{j+4}A_{j+5}\cdots Z_{j+l-1},
\end{align}
which forms pair with 
\begin{align}
    Y_{j+2}A_{j+3}A_{j+4}A_{j+5}\cdots Z_{j+l-1}.
\end{align}
Similarly, $A_{j+3}$ should take $Y$, correspondingly $A_{j+4}$ takes $X$ or $Y$. It leads to the structure $Y_{j+2}Y_{j+3}\mathcal{\overline{A}}_{j+4}\cdots Z_{j+l-1}$. Then the operator in Eq. (\ref{app:ZXZ}) takes the form
\begin{align}
Z_jX_{j+1}Z_{j+2}Y_{j+3}\mathcal{\overline{A}}_{j+4}\mathcal{\overline{A}}_{j+6}\cdots Z_{j+l-1}.
\end{align}
Moreover, when deriving Eq. (\ref{nonzero sequenceZ}), we let $A_{j+m}A_{j+m+1}$ take $XY$ or $YX$ ($m\ge 1$ and $m$ is odd). If we set $A_{j+m}A_{j+m+1}$ to $XZ$ or $YZ$, we obtain the sequences
\begin{subequations}\label{more_valid_sequences}
\begin{align}
    Z_j\cdots X_{j+m}Z_{j+m+1}Y_{j+m+2}\mathcal{\overline{A}}_{j+m+3}\cdots Z_{j+l-1},\\
    Z_j\cdots Y_{j+m}Z_{j+m+1}X_{j+m+2}\mathcal{\overline{A}}_{j+m+3}\cdots Z_{j+l-1}.
\end{align}
\end{subequations}
The pattern is that the operator on the first position is $Z$, while one of the odd position takes $Z$. After the second $Z$ appears, we have the periodic structure $\mathcal{\overline{A}}$. Analysis on the remaining two boundary sequences in Eq. (\ref{3operator sequence left}) leads to similar results. Therefore we do not elaborate further.

\section{\label{App_sec:OBC_5-local}All 5-local boundary operators generated by 3-local boundary charges}

Tables \ref{table:boundary_3-local1}-\ref{table:boundary_3-local3} list all $5$-local boundary operators generated by commutators of $3$-local left boundary charges and $3$-local Hamiltonian terms. The red operators are always located at the third lattice site, which overlaps with the Hamiltonian term in the commutator. In these tables, no $5$-local operators appear twice, implying the absence of $3$-local boundary charges.

\begin{table*}[!ht]
\caption{All $5$-local operators generated by commutators between the $3$-local boundary charges (the first column) and the $3$-local Hamiltonian terms (the first row). The red operators are always located at the third lattice site. Therefore the commutation relations are anchored at the third lattice site. In this table, the $3$-local boundary charges have the structure $X_1\cdot\cdot$.}
    \centering
    \begin{tabular}{c|cccc}
    \hline
        ~ & $-{\color{red}Z_3}X_4X_5$ & $-{\color{red}Z_3}Y_4Y_5$ & ${\color{red}X_3}X_4Z_5$ & ${\color{red}Y_3}Y_4Z_5$ \\ \hline
       $X_1I{\color{red}X_3}$ & $XIYXX$ & $XIYYY$ & ~ & $XIZYZ$ \\ 
        $X_1I{\color{red}Y_3}$& $-XIXXX$ & $-XIXYY$ & $-XIZXZ$ & ~ \\ 
        $X_1I{\color{red}Z_3}$ & ~ & ~ & $XIYXZ$ & $-XIXYZ$ \\ 
        $X_1X_2{\color{red}X_3}$& $XXYXX$ & $XXYYY$ & ~ & $XXZYZ$ \\ 
        $X_1X_2{\color{red}Y_3}$ & $-XXXXX$ & $-XXXYY$ & $-XXZXZ$ & ~ \\ 
        $X_1X_2{\color{red}Z_3}$ & ~ & ~ & $XXYXZ$ & $-XXXYZ$ \\ 
        $X_1Y_2{\color{red}X_3}$ & $XYYXX$ & $XYYYY$ & ~ & $XYZYZ$ \\ 
        $X_1Y_2{\color{red}Y_3}$ & $-XYXXX$ & $-XYXYY$ & $-XYZXZ$ & ~ \\ 
        $X_1Y_2{\color{red}Z_3}$ & ~ & ~ & $XYYXZ$ & $-XYXYZ$ \\ 
        $X_1Z_2{\color{red}X_3}$ & $XZYXX$ & $XZYYY$ & ~ & $XZZYZ$ \\ 
        $X_1Z_2{\color{red}Y_3}$& $-XZXXX$ & $-XZXYY$ & $-XZZXZ$ & ~ \\ 
         $X_1Z_2{\color{red}Z_3}$ & ~ & ~ & $XZYXZ$ & $-XZXYZ$ \\ \hline
    \end{tabular}
    \label{table:boundary_3-local1}
\end{table*}

\begin{table*}[!ht]
\caption{All $5$-local operators generated by commutators between the $3$-local boundary charges with the structure $Y_1\cdot\cdot$ and the $3$-local Hamiltonian terms. }
    \centering
    \begin{tabular}{c|cccc}
    \hline
        ~ & $-{\color{red}Z_3}X_4X_5$ & $-{\color{red}Z_3}Y_4Y_5$ & ${\color{red}X_3}X_4Z_5$ & ${\color{red}Y_3}Y_4Z_5$ \\ \hline
       $Y_1I{\color{red}X_3}$ & $YIYXX$ & $YIYYY$ & ~ & $YIZYZ$ \\ 
        $Y_1I{\color{red}Y_3}$& $-YIXXX$ & $-YIXYY$ & $-YIZXZ$ & ~ \\ 
        $Y_1I{\color{red}Z_3}$ & ~ & ~ & $YIYXZ$ & $-YIXYZ$ \\ 
        $Y_1X_2{\color{red}X_3}$& $YXYXX$ & $YXYYY$ & ~ & $YXZYZ$ \\ 
        $Y_1X_2{\color{red}Y_3}$ & $-YXXXX$ & $-YXXYY$ & $-YXZXZ$ & ~ \\ 
        $Y_1X_2{\color{red}Z_3}$ & ~ & ~ & $YXYXZ$ & $-YXXYZ$ \\ 
        $Y_1Y_2{\color{red}X_3}$ & $YYYXX$ & $YYYYY$ & ~ & $YYZYZ$ \\ 
        $Y_1Y_2{\color{red}Y_3}$ & $-YYXXX$ & $-YYXYY$ & $-YYZXZ$ & ~ \\ 
        $Y_1Y_2{\color{red}Z_3}$ & ~ & ~ & $YYYXZ$ & $-YYXYZ$ \\ 
        $Y_1Z_2{\color{red}X_3}$ & $YZYXX$ & $YZYYY$ & ~ & $YZZYZ$ \\ 
        $Y_1Z_2{\color{red}Y_3}$& $-YZXXX$ & $-YZXYY$ & $-YZZXZ$ & ~ \\ 
         $Y_1Z_2{\color{red}Z_3}$ & ~ & ~ & $YZYXZ$ & $-YZXYZ$ \\ \hline
    \end{tabular}
    \label{table:boundary_3-local2}
\end{table*}

\begin{table*}[!ht]
\caption{All $5$-local operators generated by commutators between the $3$-local boundary charges with the structure $Z_1\cdot\cdot$ and the $3$-local Hamiltonian terms. }
    \centering
    \begin{tabular}{c|cccc}
    \hline
        ~ & $-{\color{red}Z_3}X_4X_5$ & $-{\color{red}Z_3}Y_4Y_5$ & ${\color{red}X_3}X_4Z_5$ & ${\color{red}Y_3}Y_4Z_5$ \\ \hline
       $Z_1I{\color{red}X_3}$ & $ZIYXX$ & $ZIYYY$ & ~ & $ZIZYZ$ \\ 
        $Z_1I{\color{red}Y_3}$& $-ZIXXX$ & $-ZIXYY$ & $-ZIZXZ$ & ~ \\ 
        $Z_1I{\color{red}Z_3}$ & ~ & ~ & $ZIYXZ$ & $-ZIXYZ$ \\ 
        $Z_1X_2{\color{red}X_3}$& $ZXYXX$ & $ZXYYY$ & ~ & $ZXZYZ$ \\ 
        $Z_1X_2{\color{red}Y_3}$ & $-ZXXXX$ & $-ZXXYY$ & $-ZXZXZ$ & ~ \\ 
        $Z_1X_2{\color{red}Z_3}$ & ~ & ~ & $ZXYXZ$ & $-ZXXYZ$ \\ 
        $Z_1Y_2{\color{red}X_3}$ & $ZYYXX$ & $ZYYYY$ & ~ & $ZYZYZ$ \\ 
        $Z_1Y_2{\color{red}Y_3}$ & $-ZYXXX$ & $-ZYXYY$ & $-ZYZXZ$ & ~ \\ 
        $Z_1Y_2{\color{red}Z_3}$ & ~ & ~ & $ZYYXZ$ & $-ZYXYZ$ \\ 
        $Z_1Z_2{\color{red}X_3}$ & $ZZYXX$ & $ZZYYY$ & ~ & $ZZZYZ$ \\ 
        $Z_1Z_2{\color{red}Y_3}$& $-ZZXXX$ & $-ZZXYY$ & $-ZZZXZ$ & ~ \\ 
         $Z_1Z_2{\color{red}Z_3}$ & ~ & ~ & $ZZYXZ$ & $-ZZXYZ$ \\ \hline
    \end{tabular}
    \label{table:boundary_3-local3}
\end{table*}

\end{document}